\documentclass[11pt,a4paper]{article}
\usepackage[T1]{fontenc}
\usepackage{amsmath,amssymb,amsthm,mathtools}
\usepackage{newtxtext,newtxmath}
\usepackage[left=26mm,right=26mm,top=27mm,bottom=27mm]{geometry}
\usepackage{microtype,array,booktabs,enumitem,titlesec}
\usepackage{tikz}
\usetikzlibrary{arrows.meta,positioning}
\usepackage[hidelinks]{hyperref}
\hypersetup{
 pdftitle={A Factor-Two Lower Bound for Complementing Two-Way Deterministic Finite Automata},
 pdfauthor={Sudais Arif},
 pdfsubject={State complexity of two-way deterministic complementation},
 pdfkeywords={two-way finite automata, complementation, state complexity, alternating groups, finite semigroups}
}
\titleformat{\section}{\normalfont\normalsize\bfseries}{\thesection.}{.5em}{}
\titleformat{\subsection}{\normalfont\normalsize\itshape}{\thesubsection.}{.5em}{}
\titlespacing*{\section}{0pt}{2.2ex plus .5ex minus .2ex}{1.1ex plus .2ex}
\titlespacing*{\subsection}{0pt}{1.8ex plus .4ex minus .2ex}{.7ex plus .2ex}
\newtheorem{theorem}{Theorem}[section]
\newtheorem{lemma}[theorem]{Lemma}

\newtheorem{corollary}[theorem]{Corollary}
\theoremstyle{definition}

\theoremstyle{remark}

\newcommand{\Alt}{\mathfrak A}
\newcommand{\Sym}{\mathfrak S}
\newcommand{\LL}{\mathsf L}
\newcommand{\RR}{\mathsf R}
\newcommand{\statecomplexity}{\operatorname{sc}}
\newcommand{\im}{\operatorname{Im}}

\newcommand{\id}{\operatorname{id}}
\newcommand{\acc}{\mathsf{accept}}
\newcommand{\rej}{\mathsf{reject}}
\renewenvironment{abstract}
 {\par\noindent\rule{\linewidth}{.4pt}\par\smallskip
  \noindent\textbf{Abstract}\par\smallskip\noindent\ignorespaces}
 {\par\smallskip\noindent\rule{\linewidth}{.4pt}\par\medskip}
\title{A Factor-Two Lower Bound for Complementing\\Two-Way Deterministic Finite Automata}
\author{Sudais Arif\\[.3em]\small\itshape Carnegie Mellon University in Qatar}
\date{}
\begin{document}
\maketitle
\thispagestyle{empty}

\begin{abstract}
For every $n\ge22$, we construct a language over a fixed three-letter
alphabet whose two-way deterministic state complexity is $n$ and whose
complement has state complexity $2n$. Acceptance is at the left
endmarker, and infinite computations reject. The lower bound holds
for arbitrary deterministic recognizers. The proof uses the group of
transformations induced by the witness to constrain the behavior of a
smaller recognizer, then derives a contradiction from a permutation
query problem. A matching halting complement follows from a
$2n$-state construction for sources whose destination states determine
the direction of each transition. Splitting states by incoming
direction gives a $2n+2b$ upper bound for general sources, where $b$
counts states entered in both directions. In the acceptance-anywhere
model of Geffert, Mereghetti, and Pighizzini, the same family has exact
complexities $N$ and $2N-1$ for every $N\ge23$.
\end{abstract}

\section{Introduction}\label{sec:intro}

Complementation is a basic operation on formal languages. If
$L\subseteq\Sigma^*$, then the complement of $L$, relative to the
fixed alphabet $\Sigma$, is $\overline L=\Sigma^*\setminus L$.
Given a machine for $L$, we ask how large a machine of the same type
for $\overline L$ must be, when such a machine exists.

For a deterministic machine that halts on every input, complementation
is obtained by exchanging accepting and nonaccepting outcomes. The
situation changes for recognizers that may have infinite computations.
A word can then fail to be accepted either because the computation
halts without accepting or because it never halts. A complementing
construction must account for both possibilities. Sipser's method
provides a way of converting nonhalting behavior into halting behavior
in space-bounded machines~\cite{Sipser1980}.

For complete one-way deterministic finite automata, complementation
is immediate: the machine reads the input once, always halts, and so
swapping final and nonfinal states recognizes the complement. Two-way
deterministic finite automata (2DFA) still recognize only regular
languages~\cite{RabinScott1959,Shepherdson1959}, but their head can
move in both directions. A 2DFA may therefore reject by looping rather
than by halting, so complementation cannot in general be obtained
simply by swapping accepting and nonaccepting states.

When Sipser's method is specialized to two-way finite automata, it
gives a quadratic $O(n^2)$ upper bound for complementing an $n$-state
2DFA. Geffert, Mereghetti, and Pighizzini (GMP) improved this to $4n$
states, independently of the input alphabet~\cite{GMP}. They asked
whether complementation can require any increase at all: namely,
whether there exists an $n$-state 2DFA whose complement cannot be
recognized by an $n$-state 2DFA. This is the fifth open problem in
their concluding section~\cite[Section~6, p.~1186, problem~(e)]{GMP}.

We answer this question affirmatively under the acceptance convention
fixed below. Here $\statecomplexity(L)$ denotes the \emph{state complexity}
of $L$: the minimum number of states of a 2DFA recognizing $L$ with
acceptance at the left endmarker and infinite computations counted as
rejection. For each $n\ge22$, we construct an $n$-state 2DFA over
the fixed alphabet $\Sigma=\{\mathtt A,\mathtt B,\mathtt C\}$ whose
language $K_n$ satisfies
\[
 \statecomplexity(K_n)=n,\qquad \statecomplexity(\Sigma^*\setminus K_n)=2n.
\]
Both lower bounds apply to arbitrary deterministic recognizers in this
model.
The same family also answers GMP's question in their original model,
where a final state accepts at any tape position: its exact source
and complement complexities are $N$ and $2N-1$, for every $N\ge23$.
Section~\ref{sec:conventions} proves this comparison. The leading
factor is two in both conventions.

The witness divides its states into two sets, or \emph{banks}, one
for forward scans and one for backward scans. Each input word induces
a transformation on each bank. A forward scan followed by a backward
scan tests an acceptance condition. If the first test fails, the
machine makes a second round from a different starting state. If this
test also fails, the same round repeats forever. Thus every rejecting
computation of the witness is infinite.

We then consider an arbitrary automaton for $\overline{K_n}$. Its
behavior on a nonempty input interval is summarized by the state and
side of the first exit for every possible entry. We call this record
an \emph{interval diagram}. Equal diagrams can be substituted in any
fixed context without changing acceptance. The witness supplies
contexts that distinguish independent permutations on the two banks.
These distinctions force a group of target diagrams to map onto a
product of two alternating groups. This group acts on the states in
which runs cross the interval, with the two exit directions counted
separately. We call these signed exits \emph{through channels}.
Determinism bounds their number by the number of target states.

If the complementing machine had fewer than $2n$ states, we show that
one alternating factor would act naturally on just one orbit of
channels while fixing all other channels. On this subgroup, each
traversal through the moving orbit amounts to a query to a permutation
or its inverse. An answer leaves the machine in a fixed state at a
fixed boundary, so the next action depends only on that answer and
cannot also depend on which query produced it. We prove that such a
controller cannot perform the test required by complementation. This
forces at least $2n$ states. The same group argument also proves that
the witness language itself requires $n$ states.

A matching upper bound follows from reverse traversal of the
configuration graph. In the witness, every transition entering a
given state moves in the same direction. For sources with this
property, called \emph{direction-determinate}, the traversal uses
$2n$ states. We give its full transition table for the present model.
For an arbitrary source, splitting the $b$ states entered in both
directions gives a halting complement with at most $2n+2b$ states.
Thus a lower bound with a leading constant greater than two would
require linearly many source states entered in both directions.

\paragraph{Related work.}
State costs of simulations between one-way and two-way automata were
studied by Sakoda and Sipser~\cite{SakodaSipser1978};
Pighizzini~\cite{Pighizzini2012} surveys subsequent work.
Jir\'askov\'a and Okhotin~\cite{JiraskovaOkhotin2017} study operations
on two-way automata, and Kunc and Okhotin~\cite{KuncOkhotin2012}
treat unary alphabets. Kapoutsis~\cite{Kapoutsis2018} studies one-way
liveness on inputs of length two and three. Our alphabet has three
letters, with no restriction on input length. The reverse traversal
for direction-determinate automata belongs to the graph-walking
constructions of Kunc and Okhotin~\cite{kuncokhotin2020}; its $2n$
state bound is restated by Martynova and
Okhotin~\cite[Lemma~8]{martynovaokhotin2021}.

For background on semigroups and permutation groups, see
Pin~\cite{Pin1997}, Howie~\cite{Howie1995}, and Dixon and
Mortimer~\cite{DixonMortimer1996}. Our small-index argument uses
Shareshian and Woodroofe~\cite[Lemma~4.2]{SW}. Related descriptions
of interval behavior occur in OpenAI's preprints on deterministic
simulation of one-way liveness~\cite{OpenAILiveness2026} and
nondeterministic complementation~\cite{OpenAIComplementation2026}.
Those papers concern different state-complexity questions. The proof
here gives the interval-diagram and group arguments needed for this
lower bound, apart from the cited standard group-theoretic facts.
No unproved conjecture is assumed.

\section{Preliminaries and main result}\label{sec:model}

\subsection{Automata and acceptance}

A 2DFA is a five-tuple $A=(Q,\Sigma,\delta,s,f)$. Here $Q$ is a
nonempty finite set of states, $\Sigma$ is a finite ordinary alphabet,
$s,f\in Q$, and
\[
 \delta:Q\times(\Sigma\sqcup\{\LL,\RR\})
       \rightharpoonup Q\times\{-1,+1\}
\]
is a partial function. The endmarkers $\LL,\RR$ are not in $\Sigma$.
Defined transitions on $\LL$ move right, and defined transitions on
$\RR$ move left. On a word $w=x_1\cdots x_t$, the input occupies
positions $1,\ldots,t$, with endmarkers at $0$ and $t+1$. A
configuration is a pair $(q,i)$ of a state and a tape position.

The computation begins in $(s,0)$ and accepts exactly upon reaching
$(f,0)$. Acceptance takes priority over any outgoing transition there.
An undefined transition at any other configuration halts and rejects;
an infinite computation also rejects. In particular, state $f$ is not
accepting at an ordinary cell or at the right endmarker. There are no
stationary moves or uncounted halting controls. Every member of $Q$,
including $s$ and $f$, contributes to the state count.

\subsection{State complexity}

Write $\statecomplexity(L)$ for the minimum number of states of a 2DFA
recognizing $L$ in this model. Fix
$\Sigma=\{\mathtt A,\mathtt B,\mathtt C\}$, write
$\overline L=\Sigma^*\setminus L$, and, for $n\ge1$, set
\begin{equation}\label{eq:cost}
 C_3(n)=\max\{\statecomplexity(\overline{L(A)}):
                 A\text{ is over }\Sigma,\ |Q_A|\le n\}.
\end{equation}
The maximum exists: there are finitely many transition tables up to
state renaming, and the recognized languages are regular. Their
complements can also be realized in this model by a rightward
finite-state scan followed, upon success, by an accepting leftward
sweep. The bound in~\eqref{eq:cost} is on the number of states of the given
source. We will also prove that no smaller source recognizes our
witness language.

\begin{theorem}[Exact factor-two witness]\label{thm:main}
For every $n\ge22$, there is an explicitly specified $n$-state 2DFA
over $\Sigma$ whose language $K_n$ satisfies
\[
 \statecomplexity(K_n)=n,\qquad \statecomplexity(\overline{K_n})=2n.
\]
Consequently $C_3(n)\ge2n$ for every $n\ge22$. Both lower bounds
range over arbitrary deterministic recognizers in the stated model,
including recognizers with infinite rejecting computations.
\end{theorem}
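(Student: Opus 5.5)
The proof splits into four parts: define the witness, prove the two upper bounds, and prove the two lower bounds. The matching lower bounds are where the work lies.

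\emph{Construction.} We take $Q=F\sqcup B$ with $|F|=a$, $|B|=n-a$ for a suitable split, say $a=\lfloor n/2\rfloor$. Every transition into $F$ moves right and every transition into $B$ moves left, so the automaton is direction-determinate. The letters $\mathtt A,\mathtt B,\mathtt C$ act on $F$ and on $B$ by permutations. We choose them so that the two induced groups contain $\Alt(F)\times\Alt(B)$ acting independently on the two banks; for example, a $3$-cycle together with an odd- or even-length cycle generating the alternating group on each bank, with the third letter adjusting parity. The endmarkers implement the two-round test. On $\LL$, state $s$ enters a forward scan from a designated $F$-state. On $\RR$, each $F$-state turns into a $B$-state. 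Arriving at $\LL$ in a designated $B$-state means $f$, so the word is accepted. Any other arrival starts a second round from a second designated $F$-state, and a failed second round repeats forever. The requirement $n\ge22$ comes from the small-index fact of Shareshian--Woodroofe and from the primitivity and degree bounds needed later.

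\emph{Upper bounds.} The bound $\statecomplexity(K_n)\le n$ holds by construction. For $\statecomplexity(\overline{K_n})\le 2n$ we use the Kunc--Okhotin reverse traversal of the configuration graph. Because entering direction is a function of the state, each configuration has a well-defined predecessor set indexed by the source states. The backtracking DFS from the accepting configuration $(f,0)$ can therefore be run with two copies of $Q$, one for each traversal direction. It halts on every input, and we swap its outcome. We write out the transition table and verify the left-endmarker acceptance convention.

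\emph{Lower bounds.} Let $M$ be any 2DFA for $\overline{K_n}$ (respectively for $K_n$) with $m$ states.
\begin{enumerate}
\item To each nonempty word $u$ we attach its interval diagram: the partial map from entries (state, side) to first exits (state, side). Diagrams compose, form a finite semigroup, and equal diagrams are interchangeable in every context.
\item Using contexts from the witness, we show that words inducing distinct elements of $\Alt(F)\times\Alt(B)$ get distinct diagrams. Taking a subgroup of the diagram semigroup (the $\mathcal H$-class of an idempotent in the minimal relevant ideal), we obtain a group $G$ that surjects onto $\Alt(F)\times\Alt(B)$.
\item $G$ permutes the through channels, meaning the entry/exit pairs crossing the interval. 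Determinism gives at most $m$ channels.
\item Suppose $m<2n$. Orbit sizes and the Shareshian--Woodroofe lemma on small-index subgroups of alternating groups then force the following: the kernel of projection to one factor, say $\Alt(F)$, acts naturally on a single orbit of size $|F|$ (or $2|F|$ counted with directions) and fixes every other channel.
\item On that subgroup, every traversal of $M$ is a query $\pi(x)$ or $\pi^{-1}(x)$ to an unknown $\pi\in\Alt(F)$. The answer leaves $M$ in a fixed state and side. So $M$ is a finite controller whose next query depends only on the last answer, not on which query produced it.
\item We show that such a controller cannot decide the two-round membership condition, which needs both $\pi$ and $\pi^{-1}$ information for the two starting states. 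The argument uses an adversary that picks $\pi$ consistent with all answers but flips membership.
\end{enumerate}
The same argument with the threshold $n$ in place of $2n$ gives $\statecomplexity(K_n)\ge n$.

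The main obstacle is the last step, the permutation-query impossibility together with the exact orbit analysis of step 4. I would first try to argue that a memoryless answer-driven walk through $\pi$ traces a single cycle structure. On a suitable $\pi$ it therefore cannot distinguish the two round outcomes, and the adversary construction would be built from that observation.
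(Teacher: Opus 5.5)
\textbf{Gap: an all-permutation witness cannot give the complement bound.} Your outline matches the paper's architecture: two direction-determinate banks, reverse-tree traversal for the $2n$ upper bound, interval diagrams, a diagram group surjecting onto $\Alt\times\Alt$, isolation of one natural orbit, and a query obstruction. The witness you propose breaks it, though. You let all three letters act by permutations on both banks. Then every context induces a pair of bijections, and inverse information is cheap for a 2DFA. The value $g_w^{-1}(q)$ is computed by a \emph{forward} scan applying $\psi_x^{-1}$, and $f_w^{-1}(p)$ by a \emph{backward} scan applying $\varphi_x^{-1}$.

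Take the natural wiring: rounds start from $P_1$ and $P_2$, the marker $\RR$ sends $P_1\mapsto Q_1$ and $P_i\mapsto Q_2$ otherwise, and acceptance is in $Q_1$. Injectivity of $f_w$ turns membership into the following condition: either $f_w^{-1}(1)\in\{1,2\}$ and $g_w^{-1}(1)\in\{1,2\}$, or $f_w^{-1}(1)\notin\{1,2\}$ and $g_w^{-1}(1)=2$. The complement is then recognized with $n+1$ states:
\begin{itemize}
\item scan forward computing $g_w^{-1}(1)$;
\item at $\RR$, reject on value $2$, and enter a left-sweeping final state on values $\ge3$;
\item on value $1$, scan backward computing $f_w^{-1}(1)$; at $\LL$, reject on $\{1,2\}$, and otherwise step right into the sweeping final state.
\end{itemize}
So with this wiring the complement complexity is at most $n+1$, not $2n$.

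The reduction also fails whatever the wiring. Consider the case step 4 leaves open where the isolated factor is the backward bank, with middle pair $(\id,\sigma)$, and its moving orbit consists of forward channels, so the raw query is $\sigma$. For bijective contexts $\lambda,\rho$, membership reads $\sigma(i)\in U'$ with $i=g_\lambda^{-1}(q_f)$ and $|U'|\le2$. A controller decides this with a single query. Your claim in step 6, that membership needs both $\pi$ and $\pi^{-1}$ information, is exactly what fails here.

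\textbf{Missing ingredient: a collapsing letter.} In the paper, $\mathtt C$ acts by the idempotents $e_k,e_\ell$ (sending $1$ to $2$). Conjugating it by independent permutation words shows that every pair of singular transformations is realized by a word (Lemma~\ref{lem:paired-generation}). The group still comes from $\{\mathtt A,\mathtt B\}$-middle words. The contexts, however, can now be:
\begin{itemize}
\item constants;
\item selectors with prescribed $a(1)=x$, $a(2)=y$, where $x=y$ is allowed;
\item output tests $b$ with $b^{-1}(1)=S$ for any $S$, including $|S|=2$;
\item the idempotent $\eta_d$ on the inactive bank.
\end{itemize}
The paper chooses these separately in each of the four cases: forward or backward factor, orbit in $U$ or in $V$. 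In every case the complement becomes $\rho^{-1}(i)\in T$ with $3\le|T|\le d-2$, where $\rho$ is the raw query. Lemma~\ref{lem:inverse-obstruction} excludes exactly this, by an argument different from your cycle heuristic:
\begin{itemize}
\item the first query $a$ must lie in $T$, and only answer $i$ may accept;
\item every other answer $j$ must next query some $\nu(j)\in T\setminus\{a\}$;
\item by pigeonhole, $\nu(j_1)=\nu(j_2)=b$ for some $j_1\ne j_2$;
\item an even completion with $\pi(a)=j_1$, $\pi(b)=j_2$, $\pi(c)=i$ then loops on a positive instance.
\end{itemize}

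\textbf{A smaller point on step 4.} The kernel of a projection need not fix the remaining channels. The paper first passes to an inclusion-minimal subgroup surjecting onto $G$. It then takes the third derived subgroup of the factor's preimage, which kills its action on the $N$-blocks of size at most $4$. The surviving orbit has size exactly $d$ (not $2d$) and lies wholly in $U$ or wholly in $V$.
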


Put $k=\lfloor n/2\rfloor$ and $\ell=\lceil n/2\rceil$. Then
$k\ge11$, $\ell\in\{k,k+1\}$, and $n=k+\ell$.
We construct $K_n=L_{k,\ell}$ below. Functions and permutations are
composed from right to left. For a positive integer $t$, write
$[t]=\{1,\ldots,t\}$, and let $\Sym_t$ and $\Alt_t$ be the
symmetric and alternating groups on $[t]$.

\section{The ternary witness}
\label{sec:new-witness}

\subsection{The source five-tuple and its language}

Fix integers $k\ge 11$ and $\ell\in\{k,k+1\}$. For every $t\ge 11$, define
\[
 c_t=\begin{cases}(1\ 2\ \cdots\ t),&t\text{ odd},\\
                  (1\ 2\ \cdots\ t-1),&t\text{ even},\end{cases}
 \qquad
 d_t=\begin{cases}(1\ 2),&t\text{ odd},\\
                  (t-1\ t),&t\text{ even},\end{cases}
\]
and let $e_t$ fix every point except $1$, which it sends to $2$.
The order $r_t$ of $c_t$ is odd. Moreover, $c_t,d_t$ generate $\Sym_t$:
in odd degree these are a full cycle and an adjacent transposition; in even
degree the conjugates of $d_t$ by powers of $c_t$ are all transpositions
$(i\ t)$, $1\le i<t$.

Use the fixed ordinary alphabet $\Sigma=\{\mathtt A,\mathtt B,\mathtt C\}$.
Associate to each letter two transformations as follows:
\[
\begin{array}{c|cc}
 x&\varphi_x:[k]\to[k]&\psi_x:[\ell]\to[\ell]\\ \hline
 \mathtt A&c_k&d_\ell\\
 \mathtt B&d_k&c_\ell^{-1}\\
 \mathtt C&e_k&e_\ell
\end{array}
\]
The source is the following five-tuple, with exactly $k+\ell$ states:
\begin{equation}\label{eq:source-tuple}
Q=\{P_i:i\in[k]\}\sqcup\{Q_j:j\in[\ell]\},
 \qquad A_{k,\ell}=(Q,\Sigma,\delta,P_1,Q_1).
\end{equation}
For every ordinary $x\in\Sigma$, its transitions are
\begin{equation}\label{eq:source-ordinary}
\delta(P_i,x)=(P_{\varphi_x(i)},+1),\qquad
 \delta(Q_j,x)=(Q_{\psi_x(j)},-1).
\end{equation}
The defined marker rows are
\begin{equation}\label{eq:source-markers}
\begin{array}{c|c}
 \text{row}&\text{transition}\\ \hline
 (P_1,\LL)&(P_1,+1)\\
 (P_1,\RR)&(Q_1,-1)\\
 (P_i,\RR),\ i\ne1&(Q_2,-1)\\
 (Q_j,\LL),\ j\ne1&(P_2,+1).
\end{array}
\end{equation}
All remaining marker rows are undefined. As stipulated by the model,
$(Q_1,\LL)$ accepts immediately.

For $w=x_1\cdots x_t$, put
\begin{equation}\label{eq:word-maps}
f_w=\varphi_{x_t}\circ\cdots\circ \varphi_{x_1},\qquad
 g_w=\psi_{x_1}\circ\cdots\circ \psi_{x_t},
 \qquad f_\varepsilon=g_\varepsilon=\mathrm{id}.
\end{equation}
The different orders reflect the two scan directions. In particular,
$f_{uv}=f_v\circ f_u$ and $g_{uv}=g_u\circ g_v$.
Define $\chi:[k]\to\{1,2\}$ by $\chi(1)=1$ and $\chi(i)=2$ for $i\ne1$.

\begin{lemma}[Witness language]\label{lem:witness-language}
The displayed source recognizes
\begin{equation}\label{eq:witness-language}
L_{k,\ell}=\bigl\{w:
 g_w(\chi(f_w(1)))=1\ \text{or}\ g_w(\chi(f_w(2)))=1\bigr\}.
\end{equation}
Every word outside this language produces an infinite run.
\end{lemma}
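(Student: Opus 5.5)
We trace the run of $A_{k,\ell}$ on an arbitrary word $w=x_1\cdots x_t$ directly, one sweep at a time. The plan is to show that the computation is a sequence of forward sweeps in the $P$-bank alternating with backward sweeps in the $Q$-bank. The first forward sweep starts at $P_1$ on $\LL$, and every later one starts at $P_2$ just right of $\LL$. At the end the run either accepts at $(Q_1,0)$ or enters a cycle.

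\emph{Forward sweeps.} By \eqref{eq:source-ordinary}, a head in state $P_i$ at position $1$ reads $x_1,\dots,x_t$ and moves right each time. It therefore arrives at $\RR$ (position $t+1$) in state $P_{f_w(i)}$, using the composition order in \eqref{eq:word-maps}. The transition $\delta(P_1,\LL)=(P_1,+1)$ places $P_1$ at position $1$. Hence the first sweep reaches $\RR$ in $P_{f_w(1)}$. By the $\RR$-rows of \eqref{eq:source-markers}, the machine then moves to position $t$ in state $Q_{\chi(f_w(1))}$; the function $\chi$ packages exactly the two cases $i=1$ and $i\neq1$.

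\emph{Backward sweeps.} A head in state $Q_j$ at position $t$ reads $x_t,\dots,x_1$, moving left each time. It reaches $\LL$ in state $Q_{\psi_{x_1}\circ\cdots\circ\psi_{x_t}(j)}=Q_{g_w(j)}$. If $g_w(j)=1$, the run accepts at $(Q_1,0)$. Otherwise the row $(Q_{j'},\LL)$ with $j'\neq1$ sends the head to position $1$ in state $P_2$. The second forward sweep then ends in $Q_{\chi(f_w(2))}$ at position $t$, and the second backward sweep ends at $\LL$ in $Q_{g_w(\chi(f_w(2)))}$. If this state is $Q_1$, the run accepts. If not, the machine is again in configuration $(P_2,1)$. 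Since the automaton is deterministic, the second round repeats forever, and the run is infinite. This gives both the description \eqref{eq:witness-language} and the claim that every rejected word loops.

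\emph{The empty word, the main obstacle.} The only step needing care is $t=0$, where positions $0$ and $1$ are the two endmarkers and the "sweeps" collapse. From $(P_1,0)$ the machine moves to $(P_1,1)$ on $\RR$, then to $(Q_1,0)$, and accepts. The formula agrees with this, since $f_\varepsilon=g_\varepsilon=\mathrm{id}$ and $\chi(1)=1$ give $g_\varepsilon(\chi(f_\varepsilon(1)))=1$. We also check that no undefined row is ever reached, which would halt the run instead of looping. The only $P$-row used on $\LL$ is $(P_1,\LL)$, used only at the very start. No $Q$-state is ever on $\RR$, because every $Q$-state is entered moving left from $\RR$ or from an ordinary cell. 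All ordinary rows are defined. Hence every nonaccepting run is infinite.
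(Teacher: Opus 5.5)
Your proof is correct and follows the same route as the paper. You trace the first round from $P_1$ and then the round from $P_2$, and when both tests fail you use recurrence of the configuration $(P_2,1)$ together with determinism to obtain an infinite run. Your explicit check that no undefined row is ever reached, so that a nonaccepting run cannot halt, fills in a detail the paper leaves implicit. The empty word is also covered directly by your general sweep argument with zero-length sweeps, so it is not really an obstacle.
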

\begin{proof}
The initial forward scan starts in $P_1$ and reaches the right marker in
$P_{f_w(1)}$. Its marker row starts the reverse scan in $Q_{\chi(f_w(1))}$.
That scan reaches the left marker in $Q_{g_w(\chi(f_w(1)))}$, accepting exactly
when its index is $1$. Otherwise, its marker row starts a forward scan in
$P_2$. Every subsequent round tests $g_w(\chi(f_w(2)))=1$ in exactly the same
way. If this test fails, the configuration immediately after the next
left-marker transition repeats. The empty word accepts on its first
round, consistently with the formula.
\end{proof}

Figure~\ref{fig:witness} summarizes the source computation.

\begin{figure}[tb]
\centering
\begin{tikzpicture}[>=Stealth,node distance=17mm and 21mm,
 every node/.style={font=\small},
 phase/.style={draw,rounded corners=2pt,align=center,inner sep=7pt},
 flow/.style={->,semithick}]
\node[phase] (one) {Round from $P_1$\\$g_w(\chi(f_w(1)))=1$?};
\node[phase,right=of one] (two) {Round from $P_2$\\$g_w(\chi(f_w(2)))=1$?};
\node[phase,below=of one] (accept) {Accept at $(Q_1,\LL)$};
\draw[flow] (one) -- node[above]{no} (two);
\draw[flow] (one) -- node[left]{yes} (accept);
\draw[flow] (two.south) |- node[pos=.75,above]{yes} (accept.east);
\draw[flow] (two.east) -- ++(8mm,0) -- ++(0,13mm)
 -| node[pos=.45,above]{no: repeat} (two.north);
\end{tikzpicture}
\caption{Each round consists of a forward scan in the $P$-states and
a backward scan in the $Q$-states. All later rounds begin in $P_2$.
The boxes describe phases of the displayed automaton, with no
additional states.}
\label{fig:witness}
\end{figure}

\subsection{Words for independent transformations}

For a word over $\{\mathtt A,\mathtt B\}$, both transformations are
permutations; write $h_w=g_w^{-1}$. Then
\[
 (f_{uv},h_{uv})=(f_v\circ f_u,h_v\circ h_u).
\]
Thus $\mathtt A$ and $\mathtt B$ induce the pairs $(c_k,d_\ell)$
and $(d_k,c_\ell)$ in the coordinates $(f_w,h_w)$. Since $r_k,r_\ell$ are odd,
\[
\begin{array}{c|c}
 \text{word}&(f_w,h_w)\\ \hline
 \mathtt A^{r_k}&(\mathrm{id},d_\ell)\\
 \mathtt A^{r_k+1}&(c_k,\mathrm{id})\\
 \mathtt B^{r_\ell}&(d_k,\mathrm{id})\\
 \mathtt B^{r_\ell+1}&(\mathrm{id},c_\ell).
\end{array}
\]
Consequently these ordinary words generate $\Sym_k\times\Sym_\ell$.
Equivalently, every independent pair of permutations $(f_w,g_w)$ has a
word representative. Representatives may always be made nonempty by
inserting the identity word $\mathtt A^{2r_k}$.

\begin{lemma}[Paired singular generation]\label{lem:paired-generation}
For every pair of singular (nonbijective) transformations
$f:[k]\to[k]$ and $g:[\ell]\to[\ell]$, there is an ordinary word $w$
with $(f_w,g_w)=(f,g)$.
\end{lemma}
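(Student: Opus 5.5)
The plan is to reduce to the defect-one case using the letter $\mathtt C$ and the independence of the permutation coordinates, and then to invoke Howie's theorem that every singular transformation of a finite set is a product of idempotents of defect one (rank one less than the degree). A padding trick will make the two factorizations the same length.

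\emph{Step 1 (all pairs of defect-one maps).} Let $u,v$ be words over $\{\mathtt A,\mathtt B\}$. By the multiplication rules after \eqref{eq:word-maps}, the word $u\mathtt C v$ satisfies
\[
 f_{u\mathtt Cv}=f_v\circ e_k\circ f_u,\qquad g_{u\mathtt Cv}=g_u\circ e_\ell\circ g_v .
\]
By the preceding discussion, every independent pair of permutations $(f_w,g_w)\in\Sym_k\times\Sym_\ell$ has a word representative. So we may choose $u$ with $(f_u,g_u)=(\beta,\gamma)$ and $v$ with $(f_v,g_v)=(\alpha,\delta)$, for arbitrary $\alpha,\beta\in\Sym_k$ and $\gamma,\delta\in\Sym_\ell$. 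Every map of rank $t-1$ on $[t]$ has the form $\alpha e_t\beta$: choose $\beta$ sending the two points of a collapsed pair to $1,2$, and $\alpha$ matching images accordingly. Hence every pair $(a,b)$ with $a$ of rank $k-1$ and $b$ of rank $\ell-1$ is $(f_w,g_w)$ for some word $w$.

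\emph{Step 2 (factorization).} By Howie's theorem \cite{Howie1995}, the given singular $f$ factors as $f=a_p\circ\cdots\circ a_1$ and the given singular $g$ factors as $g=b_1\circ\cdots\circ b_q$, where each $a_i,b_j$ is a defect-one idempotent.

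To equalize the lengths, suppose $p<q$. Since $|\im f|\le k-1$, pick a defect-one idempotent $\varepsilon$ whose image is a $(k-1)$-set containing $\im f$. Then $\varepsilon$ fixes $\im f$ pointwise, so $\varepsilon\circ f=f$. Prepending copies of $\varepsilon$ on the left of the factorization of $f$ raises $p$ to $q$. The case $q<p$ is symmetric, padding $g$ on the left with a defect-one idempotent fixing $\im g$.

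Now take words $w_i$ with $(f_{w_i},g_{w_i})=(a_i,b_i)$ from Step 1, and set $w=w_1\cdots w_p$. The rules $f_{uv}=f_v\circ f_u$ and $g_{uv}=g_u\circ g_v$ give
\[
 f_w=a_p\circ\cdots\circ a_1=f,\qquad g_w=b_1\circ\cdots\circ b_p=g .
\]

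\emph{Main obstacle.} The only nontrivial ingredient is the defect-one idempotent generation of the singular part of the full transformation monoid. If a self-contained argument is preferred, I would instead prove that every singular map is a product of defect-one maps, which suffices by Step 1. The argument is by induction on $t-\operatorname{rank}$. Write a map $f$ of rank $r<t-1$ as $f=f'\circ a$, where $a$ is a defect-one map merging two points that $f$ already identifies, and $f'$ agrees with $f$ off the freed point of $\im a$ but sends that point outside $\im f$. Then $f'$ has rank $r+1\le t-1$, and induction on $t-\operatorname{rank}$ applies to it. The care needed is only in choosing these points consistently, which is routine.
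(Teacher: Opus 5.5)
Your argument is correct and follows essentially the paper's route. Both realize pairs of elementary singular maps by placing $\mathtt C$ between independent permutation words, factor each singular map into such pieces, equalize the two factorization lengths with idempotents, and concatenate while respecting the opposite composition orders of $f_w$ and $g_w$. The differences are minor.

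The paper uses only the conjugates $(p^{-1}e_kp,\,qe_\ell q^{-1})$. It writes each singular map explicitly as a permutation after elementary collapses, using fiber representatives. It then handles the leftover permutations with a prefix word $(\mathrm{id},\rho)$ and a suffix word $(\pi,\mathrm{id})$.

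Your two-sided placement of $\mathtt C$ yields every defect-one map directly. As you note, this means the elementary induction on defect suffices, and Howie's idempotent-generation theorem is more than you need.
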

\begin{proof}
First, independent permutation words conjugate $\mathtt C$ into a word
representing any prescribed pair of elementary collapses. More explicitly,
if $u,v$ have transformation pairs $(p,q)$ and $(p^{-1},q^{-1})$, then
\[
 (f_{u\mathtt Cv},g_{u\mathtt Cv})
   =(p^{-1}e_kp,\,q e_\ell q^{-1}).
\]
The permutations $p,q$ can be chosen independently.

Every singular transformation of $[t]$ is a permutation composed after a
nonempty product of elementary collapses. Indeed, choose one representative
of each nonempty fiber and collapse each other point onto that representative.
A permutation then maps the representatives to their prescribed images.
Every elementary collapse is idempotent, so repeating a factor
consecutively lengthens the product without changing its value.
We may therefore write, with the same positive integer $r$,
\[
 f=\pi\circ a_r\circ\cdots\circ a_1,
 \qquad
 g=\rho\circ b_r\circ\cdots\circ b_1,
\]
where all $a_i,b_i$ are elementary collapses. Choose an initial permutation word
with pair $(\mathrm{id},\rho)$, then $r$ words with pairs
$(a_i,b_{r+1-i})$ in the order $i=1,\ldots,r$, and a final permutation word with
pair $(\pi,\mathrm{id})$. Their concatenation has pair $(f,g)$: the first
coordinate composes in reverse physical order, while the second composes
in physical order, as required for the backward scan.
\end{proof}

The lemma covers pairs in which both coordinates are singular.
All contexts used below have this form. In particular, define the
singular map $\eta_t$ by
\[
 \eta_t(1)=1,\qquad \eta_t(j)=2\quad(j\ne1).
\]
It is idempotent and fixes both $1$ and $2$. When an inactive bank need
only preserve these two distinguished points, $\eta_t$ supplies a
singular coordinate for the paired-generation lemma.

\section{Interval diagrams and through channels}\label{sec:diagrams}

\subsection{Composition and substitution}

Fix an arbitrary target $B$ with $m$ states recognizing
$L_{k,\ell}$ or $\overline{L_{k,\ell}}$. No structural assumption is made about its
transition table. In particular, it may change direction on every
ordinary symbol and may reject by looping.

For a nonempty ordinary word $v$, run $B$ on just its interval of
cells, stopping at the first step outside that interval. A left entry
in state $q$ means that the head is on its first cell in state $q$;
a right entry means that it is on its last cell in state $q$. There
are four partial exit functions:
\begin{equation}\label{eq:exit-table}
 \begin{array}{c|cc}
       &\text{exit left}&\text{exit right}\\\hline
 \text{enter left}&L_v&F_v\\
 \text{enter right}&B_v&R_v.
 \end{array}
\end{equation}
The output is the state in which the head first exits. For a given
entry there is at most one finite exit. If the run hits an undefined
transition or remains inside forever, both applicable functions are
undefined. These two failures may be identified because both reject
and neither can interact further with the outside of the interval.
There is no accepting configuration inside this interval: it contains
only ordinary letters, whereas acceptance is confined to the actual
left endmarker.

The complete interval diagram $D_v=(F_v,B_v,L_v,R_v)$ records these
functions on all states and both entry sides. An entry or exit, also called a \emph{port}, is specified by a state
and a side. Ports on opposite sides are distinct even when they have
the same state label.

\begin{lemma}[Composition and substitution]\label{lem:diagrams}
The diagram $D_{uv}$ is determined by $D_u,D_v$. Equality of diagrams
permits substitution between any fixed ordinary left and right
contexts with the actual endmarkers and initial/final convention.
There are finitely many diagrams for an $m$-state target.
\end{lemma}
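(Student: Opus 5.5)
The plan is to prove the three claims in turn, starting with the composition claim, which carries most of the work.

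\emph{Composition.} Consider the concatenation $uv$. Its interval splits at the boundary between the last cell of $u$ and the first cell of $v$. A run inside $uv$ alternates between maximal segments inside $u$ and inside $v$. Each segment starts at a port of that factor and ends either at an exit port or by failing.

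Take a left entry of $uv$ in state $q$. The run first behaves as the run of $u$ entered from the left in state $q$, so its first exit is governed by $L_u(q)$ and $F_u(q)$:
\begin{itemize}
\item If $L_u(q)$ is defined, the run leaves $uv$ on the left, and $L_{uv}(q)=L_u(q)$.
\item If $F_u(q)=p$, the head sits on the first cell of $v$ in state $p$, which is a left entry of $v$. The next segment is governed by $L_v(p)$ and $F_v(p)$.
\item A right exit of $v$ leaves $uv$ on the right.
\item A left exit of $v$ in state $p'$ puts the head on the last cell of $u$ in state $p'$, which is a right entry of $u$, governed by $B_u(p')$ and $R_u(p')$.
\end{itemize}
So the run is simulated by iterating the partial maps $F_u$, $L_v$, $R_u$ and the analogous ones. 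This iteration is a deterministic walk on the finite set of boundary crossings, namely pairs of a state and a crossing direction. If the walk revisits a crossing, it cycles forever. That case is exactly an infinite run inside $uv$, so the entry is undefined in $D_{uv}$. This matches the identification of looping with halting failure. Otherwise the walk ends after at most $2m$ crossings at an exit of $uv$ or at an undefined value. All four entry types are handled symmetrically, so $D_{uv}$ is a function of $(D_u,D_v)$.

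\emph{Substitution.} Fix ordinary contexts $x,y$ and consider the tape $\LL x v y\RR$. Treat the interval of $v$ as a black box. The global run is the run of $B$ with every visit to that interval replaced by the corresponding exit given by $D_v$. This is valid because the interval contains no endmarker and hence no acceptance: acceptance happens only at $(f,0)$, which lies outside the interval. Each visit to the interval therefore ends in one of two ways:
\begin{itemize}
\item a finite exit, which $D_v$ determines from the entry port;
\item a rejection, either by halting or by looping. In that case $D_v$ is undefined and the global run rejects.
\end{itemize}
The segments outside the interval depend only on $x$, $y$, the endmarkers and $\delta$. By induction on the number of visits, the sequence of configurations outside the interval, and hence acceptance, is the same for every $v'$ with $D_{v'}=D_v$.

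The one point needing care is a run that visits the interval infinitely often while looping globally. That run rejects under both $v$ and $v'$, because the outside trajectories coincide step by step. The same holds when $v$ is at the very beginning or end of the input, taking $x=\varepsilon$ or $y=\varepsilon$; the neighbouring cell is then an endmarker and the same argument applies.

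\emph{Finiteness.} Each of the four partial functions maps the $m$ states into $Q\sqcup\{\bot\}$. This gives at most $(m+1)^{4m}$ diagrams.

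The main obstacle is the composition step. It needs a precise argument that a repeated boundary crossing forces an infinite run and a non-repeating walk terminates. Both follow from determinism of $B$ and the finiteness of the set of crossings.
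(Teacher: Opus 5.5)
Your proposal is correct and follows essentially the paper's route: a deterministic walk on join crossings (with a repeated crossing meaning an infinite run) for composition, black-box replacement of each interval visit by its exit response for substitution, and a counting bound for finiteness. The only difference is in the count. Your $(m+1)^{4m}$ treats the four exit functions as independent partial maps. The paper's $(2m+1)^{2m}$ instead uses that each of the $2m$ entries has a single outcome. Both are valid upper bounds.
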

\begin{proof}
At the join of $u$ and $v$, an exit to the right of $u$ becomes
a left entry into $v$ with the same state, and a left exit of $v$
becomes a right entry into $u$ with the same state. Follow these
instructions until an exterior exit or a failure occurs. If joins
are crossed infinitely often, there is no exterior exit. Equivalently,
the finite join-control graph eventually repeats a control, which
also yields no exterior exit. This computes all four functions of
$D_{uv}$ from the two diagrams. It describes the actual run, so the
composition is associative.

The same calculation with arbitrary fixed outside regions preserves
the whole deterministic run at every visit to an interval boundary.
A run terminating or accepting in an outside region is therefore
preserved. A run failing inside, or making infinitely many boundary
visits without accepting, remains rejecting. This proves substitution,
including loops and the actual accepting left endmarker configuration. Finally,
each of $2m$ entries has at most $2m$ exits or failure, so at most
$(2m+1)^{2m}$ possible complete exit tables exist. Some such tables
need not be realizable, but this upper bound proves finiteness.
\end{proof}

It is useful to specify the multiplication convention explicitly.
Write $D_u\diamond D_v=D_{uv}$ for physical concatenation. For
group arguments use instead
\begin{equation}\label{eq:product}
                         D_uD_v=D_v\diamond D_u=D_{vu}.
\end{equation}
This reverses the order of concatenation so that
$D_w\mapsto(f_w,g_w^{-1})$ respects ordinary composition of functions.
The convention concerns only the notation for multiplication.

\subsection{The quotient determined by the source transformations}

For the fixed recognizer, let $M$ consist of the
diagrams of nonempty $\{\mathtt A,\mathtt B\}$-words whose semantic pair
$(f_w,g_w^{-1})$ belongs to $\Alt_k\times\Alt_\ell$. We call this
the \emph{semantic pair}: it records the source transformations,
whereas $D_w$ records the target behavior. The set $M$ is a finite
semigroup under~\eqref{eq:product}. Write
$G=\Alt_k\times\Alt_\ell$. Every element of $G$ has an allowed
word representative by the independent generation above.

\begin{lemma}[Product semantic quotient]\label{lem:semantic-product}
The assignment
\[
 \theta(D_w)=(f_w,g_w^{-1})
\]
is a well-defined surjective homomorphism
$M\twoheadrightarrow\Alt_k\times\Alt_\ell$.
\end{lemma}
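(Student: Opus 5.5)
The plan is to verify three things in turn: that $\theta$ is well defined, that it is a homomorphism, and that it is surjective. Well-definedness is the substantive step. Homomorphism and surjectivity are bookkeeping with the product convention~\eqref{eq:product} and the independent generation of $\Sym_k\times\Sym_\ell$.

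\emph{Well-definedness.} Suppose $u,v$ are nonempty $\{\mathtt A,\mathtt B\}$-words with $D_u=D_v$. We must show $f_u=f_v$ and $g_u=g_v$. Suppose instead that $f_u(i)\ne f_v(i)$ for some $i$. Since these are permutations, take $f:=\eta_k\circ\sigma$ for a permutation $\sigma$ chosen so that $\sigma f_u(i)=1$ and $\sigma f_v(i)=2$. We build a left context $x$ and a right context $y$ using Lemma~\ref{lem:paired-generation}, so that $xuy\in L_{k,\ell}$ and $xvy\notin L_{k,\ell}$. Lemma~\ref{lem:diagrams} then gives a contradiction, because the target recognizes $L_{k,\ell}$ or its complement and substitution preserves acceptance.

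To make the test sharp, both rounds of the source must be controlled.
\begin{itemize}
\item Choose $x$ with $f_x$ singular, sending both $1$ and $2$ to $i$. Then the rounds from $P_1$ and from $P_2$ run identically through $u$.
\item Choose $y$ with $f_y=\eta_k\circ\sigma$. The forward value is then $1$ in the $u$-case and $2$ in the $v$-case, so $\chi$ gives $1$ versus $2$.
\item For the backward bank, $g_{xuy}=g_x g_u g_y$. Choose $g_y$ singular, collapsing $\{1,2\}$ injectively onto two points $a\ne b$ (this needs $\ell\ge3$). Then choose $g_x$ singular with $g_x g_u(a)=1$ and $g_x g_u(b)\ne1$.
\end{itemize}
The constraint that worries me is that $g_x g_v$ must give the same outcome on the $v$-side. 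In fact it need not: only the $u$-side value enters for $xuy$, and only the $v$-side value enters for $xvy$. Since $g_x$ is fixed, it suffices that $g_x g_u(a)=1$ while $g_x g_v(b)\ne1$. A singular $g_x$ can arrange this whenever $g_u(a)\ne g_v(b)$, and choosing $a,b$ suitably with $\ell\ge3$ makes this possible.

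The symmetric case, $f_u=f_v$ but $g_u(j)\ne g_v(j)$ for some $j$, is handled the same way. Take $f_x,f_y$ so that both rounds yield $\chi$-value $1$, for instance $f_x$ constant-like onto a point and $f_y=\eta_k$ after a permutation. Take $g_y$ singular mapping $1$ to $j$, and $g_x$ singular separating $g_u(j)$ from $g_v(j)$, sending the former to $1$ and the latter away from $1$. Then $xuy\in L_{k,\ell}$ and $xvy\notin L_{k,\ell}$. Here every context coordinate is singular, so Lemma~\ref{lem:paired-generation} applies directly.

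\emph{Homomorphism.} Under~\eqref{eq:product}, $D_uD_v=D_{vu}$. From the coordinate rule $(f_{vu},h_{vu})=(f_u\circ f_v,\,h_u\circ h_v)$ we get $\theta(D_{vu})=\theta(D_u)\theta(D_v)$. Closure of $M$ follows because $\Alt_k\times\Alt_\ell$ is a subgroup.

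\emph{Surjectivity.} Every element of $G$ has a nonempty word representative, by the generation table and padding with $\mathtt A^{2r_k}$.

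The main obstacle I expect is the bookkeeping in the well-definedness step. Both source rounds must be forced to agree so that the disjunction in~\eqref{eq:witness-language} reduces to a single test, and the singular contexts must be chosen so that membership genuinely flips.
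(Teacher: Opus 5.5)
Your proof is correct and takes essentially the paper's route: distinct semantic pairs are separated by contexts whose coordinates are all singular, so Lemma~\ref{lem:paired-generation} supplies them, and substitution from Lemma~\ref{lem:diagrams} then forces distinct diagrams; the homomorphism and surjectivity steps are the same bookkeeping. The only difference is the order of cases: you treat a forward difference first without assuming the backward maps agree, which is why you need $a\ne b$ with $g_u(a)\ne g_v(b)$ (the bullet's condition $g_xg_u(b)\ne1$ should read $g_xg_v(b)\ne1$, as you yourself correct), whereas the paper handles $g\ne g'$ first and can then assume $g=g'$ in the forward case.
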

\begin{proof}
We prove that distinct semantic pairs are separated by ordinary contexts.
For a middle word with pair $(\pi,\sigma)$, put $g=\sigma^{-1}$. For
contexts $\lambda,\rho$, the complete transformations are
\[
 f_{\lambda w\rho}=f_\rho\circ\pi\circ f_\lambda,
 \qquad
 g_{\lambda w\rho}=g_\lambda\circ g\circ g_\rho.
\]
Consider two pairs $(\pi,\sigma)$ and $(\pi',\sigma')$, writing
$g'=\sigma'^{-1}$.

If $g\ne g'$, choose $c$ with $g(c)\ne g'(c)$. Take $f_\lambda$ and
$f_\rho$ constantly $1$, take $g_\rho$ constantly $c$, and let $g_\lambda$
map $g(c)$ to $1$ and every other point to $2$. Each context has two
singular coordinates and therefore exists by
Lemma~\ref{lem:paired-generation}. The complete forward transformation is
constant $1$, and the two reverse outcomes are respectively $1$ and $2$.
Lemma~\ref{lem:witness-language} distinguishes the two words.

If $g=g'$ but $\pi\ne\pi'$, choose $a$ with $\pi(a)\ne\pi'(a)$.
Take $f_\lambda$ constantly $a$, and let $f_\rho$ map $\pi(a)$ to $1$
and every other point to $2$. Take $g_\rho=\eta_\ell$, and let
$g_\lambda$ map $g(1)$ to $1$ and every other point to $2$.
These are again paired singular contexts. Since $g$ is a permutation,
$g(1)\ne g(2)$. For the first middle word both forward tests launch the
reverse scan at $1$ and accept; for the second both launch it at $2$ and
fail. The source answers, and hence the target answers, differ.

Equality of interval diagrams permits substitution in fixed ordinary
contexts, so these separations prove well-definedness. Surjectivity follows
from independent permutation generation. Finally,
\[
 \theta(D_uD_v)=\theta(D_{vu})
 =\bigl(f_u\circ f_v,\,g_u^{-1}\circ g_v^{-1}\bigr)
 =\theta(D_u)\theta(D_v),
\]
which proves the homomorphism law.
\end{proof}

\subsection{Extracting a group}

For an idempotent $e$ (that is, $e^2=e$), the set
$eMe=\{exe:x\in M\}$ is a monoid with identity $e$, called its
\emph{local monoid}. A \emph{unit} is an element invertible in that
monoid. The following elementary argument extracts a local monoid
that is a group.

\begin{lemma}\label{lem:group-extraction}
There is an idempotent $e\in M$ such that $eMe$ is a group mapping
onto $G$. Its identity and every element have nonempty actual
word representatives.
\end{lemma}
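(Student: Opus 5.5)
We use standard finite semigroup theory: take a $\mathcal J$-minimal (or minimal with respect to surjecting onto $G$) ideal structure. The plan is as follows.

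First, we choose a minimal ideal. $M$ is finite, so it has a minimal two-sided ideal $I$ (its kernel), which is a completely simple semigroup. By Rees' theorem, for every idempotent $e\in I$ the local monoid $eMe=eIe$ is a maximal subgroup $H_e$ of $I$. Idempotents exist in $I$: any $x\in I$ has an idempotent power $x^\omega\in I$. Fix such an $e$.

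Next, we show that $eMe$ maps onto $G$. By Lemma~\ref{lem:semantic-product}, $\theta$ is a surjective homomorphism and $G$ is a group, so $\theta(e)$ is an idempotent of a group and equals the identity. Given $g\in G$, pick $x\in M$ with $\theta(x)=g$. Then $exe\in eMe$ and $\theta(exe)=\theta(e)g\theta(e)=g$, so $\theta$ restricted to $eMe$ is onto. That $eMe$ is a group follows from the completely simple structure. To stay elementary, we can argue directly instead. For $y=exe$, the element $y$ lies in $I$, so $eMe\cdot y\cdot eMe$ generates the ideal $I$, and $MyM=I\ni e$. Hence $e=ayb$ for some $a,b\in M^1$, and so $e=(eae)y(ebe)$. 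Finiteness then gives $y^\omega=e$: in a finite monoid, a one-sided invertibility relation $e=\alpha y\beta$ within $eMe$ forces $y$ into the $\mathcal H$-class of $e$ (standard: $e\,\mathcal J\,y$, both lie in $eMe$, and stability of finite semigroups gives $e\,\mathcal R\,y$ and $e\,\mathcal L\,y$). Thus every element of $eMe$ is a unit, and $eMe$ is a group.

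Finally, we handle the word representatives. Every element of $M$ is by definition $D_w$ for a nonempty $\{\mathtt A,\mathtt B\}$-word whose pair lies in $\Alt_k\times\Alt_\ell$. The product~\eqref{eq:product} corresponds to concatenation by Lemma~\ref{lem:diagrams}, and concatenations of nonempty words are nonempty. So $e$ and each $exe$ are represented by actual nonempty words. Their semantic pairs remain in $G$ because $\theta$ is a homomorphism.

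The main obstacle is the step showing $eMe$ is a group, that is, invoking stability (or Rees–Suschkewitsch) cleanly. I would cite the standard fact that in a finite semigroup the minimal ideal is completely simple and its local monoids $eIe$ are groups (Pin, Howie). I would also note that $eMe\subseteq I$ because $I$ is an ideal, so $eMe=eIe$.
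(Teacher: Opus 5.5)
Your proof is correct, but it gets the group property by a different route than the paper.

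\textbf{Your route.} You place $e$ in the minimal ideal $I$ and obtain $eMe=eIe=H_e$, either from Rees--Suschkewitsch or from your $\mathcal J$-argument:
\begin{itemize}
\item $MyM=I$ by minimality, so $e=(eae)\,y\,(ebe)$;
\item stability turns $y\le_{\mathcal R}e$, $y\le_{\mathcal L}e$ and $y\,\mathcal J\,e$ into $y\,\mathcal H\,e$;
\item Green's theorem then makes $H_e$ a group.
\end{itemize}

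\textbf{The paper's route.} It avoids ideals and Green's relations entirely. It picks an idempotent $e$ that is minimal for the natural order $p\le e\iff pe=ep=p$. For $h\in eMe$, the idempotent power $h^t$ satisfies $eh^t=h^te=h^t$, so $h^t\le e$ and hence $h^t=e$. This exhibits $h^{t-1}$ (or $e$ when $t=1$) as an inverse of $h$ in $eMe$.

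\textbf{What each buys.} The paper's argument is self-contained and uses only the existence of idempotent powers in a finite semigroup. Your ``elementary'' alternative still rests on stability and Green's theorem, so it is not really lighter than citing Rees--Suschkewitsch. What your version gives is structural context. In a finite semigroup, the idempotents minimal in the natural order are exactly the idempotents of the kernel. So both proofs select the same class of idempotents, and your formulation shows that any idempotent of $I$ works and that the resulting groups are the (mutually isomorphic) maximal subgroups of $I$. Neither fact is needed for the lemma.

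Your surjectivity step ($\theta(e)=1$ because $G$ has a unique idempotent, and $\theta(exe)=\theta(x)$) and your remark on nonempty word representatives coincide with the paper's.
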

\begin{proof}
Every element of a finite semigroup has a positive idempotent power.
For completeness, its powers eventually have a period $d$; choose
a multiple $t$ of $d$ large enough that $t$ and $2t$ are in the
periodic part. Then $x^{2t}=x^t$.

On idempotents use the partial order $p\le e$ when $pe=ep=p$.
Reflexivity and antisymmetry are immediate; transitivity follows by
substitution in these two identities. Choose a minimal idempotent
$e$. For $h\in eMe$, an idempotent positive power $h^t$ satisfies
$eh^t=h^te=h^t$, so $h^t\le e$ and hence $h^t=e$. Consequently
every $h$ is a unit with inverse $h^{t-1}$ when $t>1$, and inverse
$e$ when $t=1$. Thus $eMe$ is a group with identity $e$.

The only idempotent in $G$ is its identity, so $\theta(e)=1$.
For any $x\in M$, the element $exe$ has image $\theta(x)$.
Thus $eMe$ still maps onto $G$. Every element belongs to $M$ and
is therefore represented by a nonempty word.
\end{proof}

Fix this idempotent diagram $e$. Write
\begin{equation}\label{eq:channels}
                    U=\im F_e,\qquad V=\im B_e.
\end{equation}
These are signed sets: $U$ consists of right-exit labels and $V$
of left-exit labels. They are treated as disjoint even when they
contain the same underlying state name.

\subsection{The action on through channels}

\begin{lemma}[Unit channels]\label{lem:unit-channels}
Every unit $g$ of $eMe$ has the same return functions and through
domains as $e$. There are unique permutations $\alpha_g$ of $U$
and $\beta_g$ of $V$ such that
\begin{equation}\label{eq:unit-form}
 L_g=L_e,\quad R_g=R_e,\quad
 F_g=\alpha_g\circ F_e,\quad B_g=\beta_g\circ B_e.
\end{equation}
The action $g\mapsto(\alpha_g,\beta_g^{-1})$ is faithful and,
under multiplication \eqref{eq:product}, is a homomorphism into
$\Sym(U)\times\Sym(V)$.
\end{lemma}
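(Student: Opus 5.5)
We work throughout with the composition formulas of Lemma~\ref{lem:diagrams}, written in the multiplication~\eqref{eq:product}. A unit $g\in eMe$ satisfies $g=ege$ and has an inverse $g'$ with $gg'=g'g=e$. The plan has three steps. First, from $g=ege$ we read off that $g$ factors through $e$ on both sides. Second, using $g'g=e$, we show the return functions cannot change. Third, we obtain the permutations $\alpha_g,\beta_g$ and check the homomorphism and faithfulness claims.

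\emph{Step 1: factorization through $e$.} Write $g=D_w$ and $e=D_z$, so that $ege$ is the diagram of the physical word $zwz$. Consider a left entry into $zwz$. The run first traverses the left copy of $z$ with $e$'s behaviour. It either returns via $L_e$ without ever entering $w$, or it exits into $w$ at a port of $\im F_e=U$, possibly after bouncing at the joins. Since $g=ege$, each exit of $g$ is therefore produced by a run that passes the outer copies of $e$. In particular, $\im F_g\subseteq \im F_e=U$ and $\im B_g\subseteq V$, because the last traversal before a right exit is through the right copy of $e$. Similarly, $\operatorname{dom}F_g\subseteq\operatorname{dom}F_e\cup\operatorname{dom}L_e$, and so on.

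\emph{Step 2: the return functions are unchanged.} This is the key and hardest step. The tool is the idempotent identity $e=ee$. If $L_e(q)$ is defined, then $L_{ee}(q)=L_e(q)$, so the run from $q$ in $z$ returns before using anything to its right. Hence $L_g(q)=L_{ege}(q)=L_e(q)$, because the left copy of $e$ already returns. This gives $L_e\subseteq L_g$ and likewise $R_e\subseteq R_g$.

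For the reverse inclusion we use $e=g'g$. Suppose $L_g(q)$ is defined but $q\notin\operatorname{dom}L_e$. Then in $e=D_{\text{word}(g)\,\text{word}(g')}$, in physical order $w\,w'$, the entry at $q$ returns leftwards through $g$ without ever reaching $g'$. So $L_e(q)=L_g(q)$ would be defined, a contradiction. Thus $L_g=L_e$, $R_g=R_e$, and by determinism the through domains also agree: $\operatorname{dom}F_g=\operatorname{dom}F_e$ and $\operatorname{dom}B_g=\operatorname{dom}B_e$. Failures are handled the same way, since they are preserved under concatenation.

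\emph{Step 3: the permutations $\alpha_g,\beta_g$.} Using Step~2, a through run of $eg=g$ from left port $q$ first crosses $e$ to $F_e(q)\in U$. It then enters $g$; it cannot bounce back into $e$ and come out again productively, because any left-return of $g$ equals $L_e$ behaviour on $U$. Since $ee=e$ fixes $U$ pointwise, $L_e$ must be undefined on $U$ as entries, so a bounce would produce a failure. Hence $F_g=F_g\circ F_e$ restricted appropriately.

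We then define $\alpha_g=F_g|_U$. It maps $U\to U$ by Step~1. It is a bijection because $F_{g'}|_U$ is a two-sided inverse: $F_{gg'}|_U=F_e|_U=\id_U$. Here $F_e|_U=\id_U$ follows from $F_{ee}=F_e$ together with the observation just made. Symmetrically we define $\beta_g$ on $V$. Uniqueness holds because $F_e$ maps onto $U$.

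\emph{Homomorphism and faithfulness.} For the product law, $F_{gh}$ under~\eqref{eq:product} is the physical concatenation $h$ then $g$, so $\alpha_{gh}=\alpha_g\alpha_h$. The backward functions compose in the opposite order, hence the inverse $\beta^{-1}$ in the stated action. Faithfulness follows because~\eqref{eq:unit-form} determines all four components of $D_g$ from $(\alpha_g,\beta_g)$ and $e$.

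The main obstacle is the careful bookkeeping in Step~2 and Step~3 showing that no bouncing at the joins can alter outputs. This relies on $F_e|_U=\id$, which needs the idempotent identity applied to runs entering $e$ at ports of $U$.
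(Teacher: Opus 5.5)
There is a genuine gap in Step 3. You assert that idempotency makes $F_e|_U=\id_U$ and leaves $L_e$ undefined on $U$, so that a run cannot bounce productively across a join. This is false.

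Take one letter $a$ with $\delta(p,a)=(u,+1)$, $\delta(u,a)=(v,-1)$, $\delta(v,a)=(u',+1)$, $\delta(u',a)=(u,+1)$. Tracing the runs gives $D_{aaa}=D_{aa}$, so $e=D_{aa}$ is idempotent. Its rows include $F_e(p)=F_e(v)=F_e(u')=u$, $L_e(u)=v$ and $R_e(v)=u'$. Thus $u\in U$, yet $u\in\operatorname{dom}L_e$ and $F_e(u)$ is undefined. In $e\diamond e$, the run leaving the first factor in $u$ is sent back in $v$, returned in $u'$, and only then crosses.

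Adjoin these four states, unreachable and with the same rows on every letter, to any recognizer. This leaves its language unchanged and reproduces the situation for every idempotent diagram, so it occurs under the lemma's hypotheses. Hence your $\alpha_g=F_g|_U$ need not even be defined, since here $u\notin\operatorname{dom}F_g=\operatorname{dom}F_e$. For the same reason, the identities $F_{e\diamond g}=F_g\circ F_e$, $F_{g\diamond g'}|_U=\id_U$ and $F_{h\diamond g}=F_g\circ F_h$ are not available. You relied on them for well-definedness, bijectivity and the product law.

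The missing ingredient is the join chain used in the paper. For $u=F_e(q)\in U$, start from $u$ and alternately apply $L_e$ in the right factor and $R_e$ in the left factor. The forward row of $e\diamond e$ at $q$ equals that of $e$, so this chain cannot fail, loop, or escape left. It therefore ends at a forward entry $r_+(u)$ with $F_e(r_+(u))=u$, but not necessarily $r_+(u)=u$.

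Your Step 2 shows that every unit has the same return functions and failure domains as $e$. So the same chain operates at every join, giving $F_{g\diamond h}=F_h\circ r_+\circ F_g$. Set $\alpha_g=F_g\circ r_+$ on $U$. Then:
\begin{itemize}
\item $\alpha_g\circ F_e=F_{e\diamond g}=F_g$;
\item $\alpha_{g'}\circ\alpha_g=F_{g\diamond g'}\circ r_+=F_e\circ r_+=\id_U$, and symmetrically, so $\alpha_g$ is a permutation of $U$;
\item $\alpha_{g\diamond h}=\alpha_h\alpha_g$, which is the product law.
\end{itemize}
The backward case is symmetric.

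The paper obtains $\alpha_g$ slightly differently. It shows that $F_g$ and $F_e$ have the same fibers, because runs merge in $e\diamond g=g$ and in $g\diamond g'=e$, and then compares image sizes using $\im F_g\subseteq U$. It uses $r_+$ only for the multiplication law. With this repair, your Steps 1--2 and your faithfulness argument stand.
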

\begin{proof}
Let $h$ be the inverse of $g$. A finite left return of the first
factor in a physical concatenation persists unchanged in the product;
a right return of the last factor likewise persists. Since
$g=e\diamond g\diamond e$, every left/right return of $e$ is a
return of $g$. Conversely, applying these observations to
$g\diamond h=e$ and $h\diamond g=e$ gives every return of $g$
as a return of $e$. Hence $L_g=L_e$ and $R_g=R_e$.

A failed exterior entry of the outer $e$ in $e\diamond g\diamond e$
cannot become a successful entry of $g$. Conversely, if a defined
left entry of $e$ failed in $g$, it would fail in $g\diamond h=e$;
the corresponding right-entry assertion uses $h\diamond g=e$.
Thus the domains of successful left and right entries agree. After
removing the equal return domains, the forward and backward through
domains agree as well.

Every final right exit of $e\diamond g\diamond e$ uses a forward
row of its last $e$, so $\im F_g\subseteq U$. If two forward
entries have equal $F_e$ outputs, their paths merge after the first
factor in $e\diamond g=g$, giving equal $F_g$ outputs. Conversely,
equal $F_g$ outputs merge in $g\diamond h=e$, giving equal $F_e$
outputs. The functions $F_g,F_e$ therefore have exactly the same
fibers on their common domain, and hence the same number of image
points. The image inclusion is consequently equality, and there
is a unique permutation $\alpha_g$ satisfying \eqref{eq:unit-form}. Reflection gives
the statement for $B_g$ and $\beta_g$.

To prove the multiplication law, we examine what happens at the
join between two intervals. In $e\diamond e=e$, enter the
join with a right-exit label $u\in U$ from the left factor. Such
a label is produced by some forward entry of $e$. At the join,
successive left returns of the right factor and right returns of
the left factor use only $L_e,R_e$. This fixed chain must eventually
reach a forward-domain entry $r_+(u)$ in the right factor. It cannot
fail, loop, or exit to the exterior left, because the corresponding
forward row of $e\diamond e$ equals that of $e$. Its ultimate
output satisfies $F_e(r_+(u))=u$.

For units $g,h$, the return rows and through domains are identical,
so the join chain after a forward output $u$ of $g$ is exactly this
same fixed chain. The output through $h$ is now $\alpha_h(u)$.
It follows that, in physical order,
$\alpha_{g\diamond h}=\alpha_h\alpha_g$. Reflection gives
$\beta_{g\diamond h}=\beta_g\beta_h$. Under \eqref{eq:product} these laws become
$\alpha_{gh}=\alpha_g\alpha_h$ and
$\beta_{gh}=\beta_h\beta_g$, proving the stated homomorphism.
Finally \eqref{eq:unit-form}, together with the fixed return and failure rows,
determines the whole diagram. Its representation by the pair of
permutations is therefore faithful. In particular,
$\alpha_e=\mathrm{id}_U$ and $\beta_e=\mathrm{id}_V$.
\end{proof}

\begin{lemma}[The number of through channels]\label{lem:capacity}
The signed through-image sets satisfy
\begin{equation}\label{eq:capacity}
                              |U|+|V|\le m.
\end{equation}
\end{lemma}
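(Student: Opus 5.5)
The plan is to send each signed exit port in $U\sqcup V$ injectively to a state of $B$, using determinism at one fixed cell of the interval of $e$. Fix a nonempty word $w$ with $D_w=e$, as provided by Lemma~\ref{lem:group-extraction}. Fix a cell $c$ of this interval; any cell works, say the first one. The idea is that every run crossing the interval visits $c$, and what happens after its last visit to $c$ is determined by the state it is in at that moment.

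\emph{Step 1: choose one run per port.} For each $u\in U=\im F_e$, pick a left entry state $p_u$ with $F_e(p_u)=u$. For each $v\in V=\im B_e$, pick a right entry state $q_v$ with $B_e(q_v)=v$. Each chosen run, restricted to the interval of $w$, is finite. It ends by leaving through the right boundary in state $u$, or through the left boundary in state $v$. A run entering on the left and leaving on the right visits every cell, and so does a run entering on the right and leaving on the left, since the head moves one cell per step. So each chosen run visits $c$ at least once, and only finitely many times. Let $\kappa(u)$, respectively $\kappa(v)$, be the state at its \emph{last} visit to $c$.

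\emph{Step 2: injectivity of $\kappa$ on $U\sqcup V$.} From configuration $(\kappa(x),c)$ the run of $B$ inside the interval is determined by the transition table and the letters of $w$. This holds until the first step outside the interval, since the inside run of Section~\ref{sec:diagrams} stops there and never consults the outside.

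Suppose two chosen runs have the same state at their last visits to $c$. From that configuration on, both runs continue identically, so they reach the same exterior exit: the same side and the same state. Hence two distinct ports cannot share a value of $\kappa$. This covers two labels of $U$ (different right-exit states), two labels of $V$, and a label of $U$ with one of $V$ (different exit sides). The mixed case is exactly where the signed, disjoint treatment of $U$ and $V$ matters: the same state name may lie in both, but the last-visit states still differ.

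It remains to check that the continuation from the last visit uses nothing outside the interval. The run may wander away from $c$ afterwards, but it never returns to $c$ and leaves the interval only at its final step. Everything after the last visit is therefore computed by $B$ on the cells of $w$ alone, and this is the point to verify. Thus $\kappa:U\sqcup V\to Q_B$ is injective, and $|U|+|V|\le m$.

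I expect the main obstacle to be the mixed case just described: a right-exit label $u$ and a left-exit label $v$ may carry the same state name, so no comparison at the boundaries separates them. The device of a single interior cell $c$ with last-visit states handles this. Using last visits rather than first visits is essential, because first-visit states could coincide for runs whose later behaviour differs. Note that the argument uses only that $e$ is the diagram of an actual nonempty word; the group structure of Lemma~\ref{lem:unit-channels} is not needed here.
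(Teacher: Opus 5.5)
Your proof is correct and is essentially the paper's argument: fix one cell of a nonempty representative of $e$, note that every through run visits it, and use determinism to show that distinct signed exits cannot share a state there. One correction to your closing remark: last visits are not essential, since by the same determinism argument any shared configuration of state and cell, including equal first-visit states, already forces identical suffix runs and hence the same signed exit. This is exactly how the paper phrases it.
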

\begin{proof}
Choose an actual nonempty word representing $e$ and one through run
for each distinct signed exit in $U\sqcup V$. Each chosen run
visits every cell of that word. No two can contain the same state
on the same cell, because the two deterministic suffix runs would
then be identical and would give the same signed exit. Fix any one
ordinary cell. The chosen runs consequently use distinct physical
states there, giving \eqref{eq:capacity}. The argument uses one
ordinary cell and therefore requires a nonempty word; an empty
interval would formally have $2m$ through ports.
\end{proof}

\section{Small actions of a product of alternating groups}\label{sec:groups}

We next show why fewer than $2(k+\ell)$ channels force one alternating
factor to act on just one natural orbit. The main step is
Lemma~\ref{lem:isolation}. Its proof first limits the possible orbit
sizes, then removes the actions on the other orbits by taking derived
subgroups. The first two lemmas provide the required index bounds.

An action is \emph{faithful} if only the identity fixes every point.
A transitive permutation group is \emph{imprimitive} if it preserves
a partition into equal-sized blocks with sizes strictly between one
and the degree; otherwise it is \emph{primitive}. The derived series
of a group $K$ is defined by $K^{(0)}=K$ and
$K^{(j+1)}=[K^{(j)},K^{(j)}]$. A group is \emph{solvable} if some
term is trivial, and \emph{perfect} if $K^{(1)}=K$. We use the
standard facts that $\Alt_d$ is nonabelian simple and perfect for
$d\ge5$, and that every subgroup of $\Sym_r$ for $r\le4$ has
derived length at most three; see~\cite{DixonMortimer1996}.

We use the following theorem of Shareshian and Woodroofe:
for $d\ge9$, every prime $p\le d-3$ divides the index of every
proper primitive subgroup of $\Alt_d$. This is
\cite[Lemma~4.2, p.~230]{SW}. For odd $p$, its proof uses Sylow
conjugacy and Jordan's prime-cycle theorem. For $p=2$, it uses the
double-transposition part of Jordan's theorem, stated as
\cite[Theorem~4.1(2)]{SW}. Together with Bertrand's postulate, this
implies the elementary small-index bounds below.

First recall that every proper subgroup of $\Alt_d$, for $d\ge5$, has
index at least $d$. Indeed, its nontrivial coset action is faithful by
simplicity. An index $r<d$ would therefore give
$d!/2\le r!\le(d-1)!$, a contradiction.

\begin{lemma}[Subgroups of small index]\label{lem:small-index}
If $t\ge11$ and $J<\Alt_t$ satisfies $[\Alt_t:J]<4t+2$, then $J$ is the
stabilizer of a point in the natural action on $[t]$.
\end{lemma}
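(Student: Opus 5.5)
Let $J<\Alt_t$ with index $r=[\Alt_t:J]<4t+2$. I will split according to how $J$ acts on the natural set $[t]$: intransitive, transitive but imprimitive, or primitive. In each case the goal is either to identify $J$ as a point stabilizer or to show that the index is at least $4t+2$.

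\emph{Primitive case.} Suppose $J$ is primitive and proper. Since $t\ge11\ge9$, the Shareshian--Woodroofe lemma applies: every prime $p\le t-3$ divides $r$. By Bertrand's postulate there is a prime $p$ with $\lfloor (t-3)/2\rfloor<p\le t-3$ once $t-3\ge2$. I also have the primes $2$ and $3$, which are at most $t-3$ because $t\ge11$, and I can take the Bertrand prime $p$ to be at least $5$.

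\begin{itemize}
\item If $p>(t-3)/2$ and $p\ge5$, then $2$, $3$ and $p$ are distinct primes, so $6p\mid r$.
\item Since $p\ge(t-2)/2$, this gives $r\ge6p\ge3(t-2)=3t-6$.
\end{itemize}

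That is not yet $4t+2$, so I would sharpen the prime count. For $t\ge11$ the set of primes at most $t-3\ (\ge8)$ contains $2,3,5,7$, giving $r\ge210$. This handles $t\le52$, since then $4t+2\le210$. For larger $t$, I would use two primes from Bertrand-type intervals, or the product of all primes at most $t-3$. That product is enormous: already $2\cdot3\cdot5\cdot7\cdot p\ge210p$ exceeds $4t+2$ once $p>t/50$. The cleanest form is to argue $r\ge 2\cdot3\cdot5\cdot p\ge30\cdot(t-2)/2=15(t-2)>4t+2$ for $t\ge11$, with the Bertrand prime $p>7$ for large $t$. For small $t$ I would instead use $210>4t+2$ directly. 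The main care point is making sure the Bertrand prime is distinct from $2,3,5$ and bounding the small range. So primitive proper subgroups are excluded.

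\emph{Intransitive case.} Suppose $J$ fixes a subset $S$ with $|S|=s$, where $1\le s\le t/2$. Then $J\le(\Sym_s\times\Sym_{t-s})\cap\Alt_t$, so $r\ge\binom ts$.

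\begin{itemize}
\item For $s\ge2$, $\binom t2=t(t-1)/2\ge5t>4t+2$ since $t\ge11$.
\item Hence $s=1$, and $J$ lies in a point stabilizer $\Alt_{t-1}$, which has index $t$.
\item Then $[\Alt_{t-1}:J]<(4t+2)/t<5$. Since $\Alt_{t-1}$ is simple with $t-1\ge10$, every proper subgroup has index at least $t-1>4$. Therefore $J=\Alt_{t-1}$.
\end{itemize}

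\emph{Imprimitive case.} Suppose $J$ preserves a block system with $b$ blocks of size $a$, where $t=ab$ and $a,b\ge2$. Then $J\le(\Sym_a\wr\Sym_b)\cap\Alt_t$, so $r\ge t!/(a!^b\,b!)$.

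\begin{itemize}
\item This quantity is the number of partitions of $[t]$ into $b$ blocks of size $a$, which is at least the number of perfect matchings or of bisections.
\item For $t\ge12$ both counts are far above $4t+2$. For example, $\tfrac12\binom t{t/2}\ge462$ and $(t-1)!!\ge10395$, and the general count is minimized at one of these extremes.
\item The case $t=11$ is prime, so it has no imprimitive subgroups.
\end{itemize}

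The only step I would grind through is a uniform lower bound for this partition count, which I expect to follow from $\binom{t}{a}\ge\binom t2$ applied to choosing the block containing $1$.
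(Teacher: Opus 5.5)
Your overall route is the paper's. You use the same split into intransitive, imprimitive and primitive cases, and the same Shareshian--Woodroofe plus Bertrand argument in the primitive case. You switch from the bound $210$ to $30p$ at $t=53$ instead of $t=14$, which is fine. In the intransitive case you make the same reduction to a point stabilizer, followed by the minimum-index fact for $\Alt_{t-1}$. Those two cases are correct.

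The imprimitive case, however, is not finished, and the route you sketch for it is misstated.

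\begin{itemize}
\item The claim that $t!/(a!^b\,b!)$ is minimized at perfect matchings or bisections is unproved. It is also vacuous when $t$ is odd and composite (e.g.\ $t=15$ or $21$), since neither extreme exists there.
\item Your proposed uniform bound uses the wrong count. The block containing $1$ can be chosen in $\binom{t-1}{a-1}$ ways, not $\binom ta$. The number of partitions can in fact be smaller than $\binom ta$: with two blocks it is exactly $\frac12\binom{t}{t/2}$.
\end{itemize}

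The repair is the paper's argument.
\begin{itemize}
\item For block size $a\ge3$ we have $2\le a-1\le t/2-1\le t-3$. So the number of partitions is at least $\binom{t-1}{a-1}\ge\binom{t-1}{2}=(t-1)(t-2)/2>4t+2$ for $t\ge12$.
\item For $a=2$, use $(t-1)!!\ge(t-1)(t-3)>4t+2$.
\end{itemize}

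Finally, in both the intransitive and imprimitive cases you silently use that the $\Alt_t$-index of the setwise stabilizer or block-system stabilizer equals the corresponding $\Sym_t$ count. This holds because the stabilizer in $\Sym_t$ contains a transposition; equivalently, an odd transporter can be corrected by a transposition inside the target set or block. It deserves a sentence, as in the paper.
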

\begin{proof}
Suppose first that $J$ is primitive. For $11\le t\le13$, the
primitive-subgroup fact forces the distinct primes $2,3,5,7$ to divide
its index. Their product $210$ exceeds $4t+2$. For $t\ge14$, put
$M=\lfloor(t-3)/2\rfloor\ge5$. Bertrand's postulate supplies a prime
$M<p<2M\le t-3$. This prime is at least $7$, so the distinct primes
$2,3,5,p$ divide the index. Consequently
\[
 [\Alt_t:J]\ge30p>30M\ge15(t-4)>4t+2,
\]
again a contradiction.

If $J$ is transitive and imprimitive, it preserves a partition into
blocks of a common size $b$, where $2\le b\le t/2$ and $b$ divides $t$.
The group $\Alt_t$ is transitive on such partitions: an odd transporter
can be made even by a transposition within a block of the image
partition. Thus the index of the partition stabilizer is the number of
these partitions, and is a lower bound for $[\Alt_t:J]$. For $b\ge3$,
choosing the other $b-1$ members of the block containing one specified
point gives at least
\[
 \binom{t-1}{b-1}\ge\binom{t-1}{2}>4t+2
\]
partitions for $t\ge12$. For $b=2$ the number is
$(t-1)!!\ge(t-1)(t-3)>4t+2$, also for $t\ge12$.
There is no imprimitive case when $t=11$, since $11$ is prime.

Finally suppose that $J$ is intransitive. Unless it has an orbit of
size $t-1$, a union of its orbits has some size $a$ with
$2\le a\le t-2$. Its setwise stabilizer has index $\binom ta$ in
$\Alt_t$, whence
\[
 [\Alt_t:J]\ge\binom ta\ge\binom t2>4t+2.
\]
Here transitivity on $a$-subsets follows by correcting an odd
transporter with a transposition inside its image subset. The remaining
case fixes one point and has $J\le\Alt_{t-1}$. If this inclusion were
proper, the minimum-index observation above would give
$[\Alt_t:J]\ge t(t-1)>4t+2$. Therefore $J$ is the full point stabilizer.
\end{proof}

\begin{lemma}[Small-index product subgroups]\label{lem:product-index}
Let $k\ge11$, $\ell\in\{k,k+1\}$, and $G=\Alt_k\times\Alt_\ell$.
Every proper subgroup $J<G$ with $[G:J]<2(k+\ell)$ has one of the forms
\[
 J=P\times\Alt_\ell
 \qquad\text{or}\qquad
 J=\Alt_k\times Q,
\]
where $P$ or $Q$, respectively, is a natural point stabilizer.
\end{lemma}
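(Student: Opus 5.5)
Write $\pi_1,\pi_2$ for the coordinate projections of $G$. The plan is a Goursat-type analysis of $J$ through these projections, combined with Lemma~\ref{lem:small-index} applied to each factor. Put $J_1=\pi_1(J)\le\Alt_k$ and $J_2=\pi_2(J)\le\Alt_\ell$. Since $J\le J_1\times J_2$, we have
\[
 [G:J]\ge[\Alt_k:J_1]\,[\Alt_\ell:J_2].
\]
The bound $2(k+\ell)\le 4\ell+2$ holds because $k\le\ell$. So if $J_1\ne\Alt_k$, Lemma~\ref{lem:small-index} shows that $J_1$ is a point stabilizer of index $k$, and symmetrically for $J_2$, where $2(k+\ell)\le 4k+2$ uses $\ell\le k+1$. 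If both projections were proper, the index would be at least $k\ell\ge 121\cdot\ldots$, and more precisely $k\ell>2(k+\ell)$ since $k,\ell\ge11$. That is impossible, so at least one projection is surjective.

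Next consider the kernels $N_1=\{a:(a,1)\in J\}$ and $N_2=\{b:(1,b)\in J\}$. These are normal in $J_1$ and $J_2$ respectively. Goursat's lemma identifies $J_1/N_1\cong J_2/N_2$ and gives
\[
 [G:J]=[\Alt_k:J_1]\,[\Alt_\ell:J_2]\,|J_1/N_1|.
\]

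Case 1 is $J_1=\Alt_k$. Then $N_1$ is normal in the simple group $\Alt_k$, so $N_1$ is $1$ or $\Alt_k$. If $N_1=1$, then $|J_1/N_1|=k!/2$, which is far larger than $2(k+\ell)$, a contradiction. Hence $N_1=\Alt_k$, and $J=\Alt_k\times J_2$. Properness forces $J_2$ to be proper, and Lemma~\ref{lem:small-index} then makes $J_2$ a natural point stabilizer $Q$.

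Case 2 is $J_2=\Alt_\ell$, which is symmetric and gives $J=P\times\Alt_\ell$. The only point to watch is that Lemma~\ref{lem:small-index} applies to $J_1$ with the bound $2(k+\ell)\le4k+2$. This uses exactly $\ell\le k+1$.

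The main obstacle is merely the bookkeeping of the index bounds. One must check that $2(k+\ell)$ lies below $4t+2$ for each factor degree $t\in\{k,\ell\}$. One must also ensure that the diagonal-type (Goursat) subgroups are excluded by the order of $\Alt_k$. Both checks are routine once the index formula is in place.
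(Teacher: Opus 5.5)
Your proof is correct and follows the paper's route: both projections cannot be proper because $k\ell>2(k+\ell)$, simplicity forces the kernel in the surjective coordinate to be all of that factor (the trivial case costs index at least $|\Alt_d|$), and Lemma~\ref{lem:small-index} then identifies the remaining proper factor as a point stabilizer. One harmless slip: in your first paragraph the inequalities $2(k+\ell)\le4\ell+2$ and $2(k+\ell)\le4k+2$ are attached to the wrong projections ($J_1\le\Alt_k$ needs the $4k+2$ bound, as you correctly say in Case~2). Both inequalities hold, and that step needs only the minimum-index fact $[\Alt_d:J_i]\ge d$ anyway.
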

\begin{proof}
Let $P\le\Alt_k$ and $Q\le\Alt_\ell$ be the projections of $J$.
If both are proper, then
\[
 [G:J]\ge[\Alt_k:P][\Alt_\ell:Q]\ge k\ell>2(k+\ell),
\]
which is impossible. Suppose, without loss of generality, that
$Q=\Alt_\ell$. The subgroup
$J\cap(\{1\}\times\Alt_\ell)$ is normal in $\{1\}\times\Alt_\ell$,
because the second projection of $J$ is surjective. Simplicity makes
this intersection trivial or the whole second factor. In the former
case the first projection embeds $J$ into $P$, and hence
\[
 [G:J]=\frac{|\Alt_k||\Alt_\ell|}{|J|}
 \ge |\Alt_\ell|>2(k+\ell),
\]
a contradiction. Thus $J=P\times\Alt_\ell$, with $P$ proper since
$J$ is proper. Now $[\Alt_k:P]<2(k+\ell)\le4k+2$, so
Lemma~\ref{lem:small-index} applies. The other case is symmetric,
using $2(k+\ell)\le4\ell+2$.
\end{proof}

\begin{lemma}[Isolation of one natural factor orbit]\label{lem:isolation}
Let $k\ge11$, $\ell\in\{k,k+1\}$, and $G=\Alt_k\times\Alt_\ell$.
Suppose a finite group $H$ acts faithfully on a finite set $X$ and
admits a surjective homomorphism $\theta:H\twoheadrightarrow G$.
If $|X|<2(k+\ell)$, then $H$ has a subgroup $\Gamma$ with the following
properties. For one $d\in\{k,\ell\}$, the restriction of $\theta$ is
an isomorphism from $\Gamma$ onto the corresponding factor $\Alt_d$
(with the other factor equal to the identity), and $\Gamma$ has exactly
one nonfixed orbit on $X$. This orbit has size $d$ and is the natural
action of that same semantic factor. Every other point of $X$ is fixed.
If $X=U\sqcup V$ with both parts $H$-invariant, this orbit lies wholly
in $U$ or wholly in $V$.
\end{lemma}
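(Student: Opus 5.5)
The plan is to work orbit by orbit, with the perfect residual of $H$ in place of $H$. Let $S=H^{(\infty)}$ be the last term of the derived series of $H$. Since $G$ is perfect and $\theta$ is surjective, $\theta(S)=G$, so we may replace $H$ by $S$. The action of $S$ on $X$ is still faithful, and any $\Gamma\le S$ produced is a subgroup of $H$.

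\emph{Step 1: classify the orbits.} Split $X$ into $S$-orbits, refining the partition $U\sqcup V$ when it is given; the last sentence of the lemma then comes for free. For an orbit $O$ and a point $x\in O$, write $K=\ker\theta$. Then
\[
 [G:\theta(S_x)]=[S:S_xK]\le[S:S_x]=|O|<2(k+\ell).
\]
By Lemma~\ref{lem:product-index}, $\theta(S_x)$ is $G$, or $P\times\Alt_\ell$, or $\Alt_k\times Q$ with $P,Q$ natural point stabilizers. Call the orbit of type $0$, $k$ or $\ell$ accordingly. An orbit of type $k$ maps $S$-equivariantly onto $S/S_xK\cong G/(P\times\Alt_\ell)$. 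This quotient is the natural $\Alt_k$-set of size $k$, so $|O|$ is a multiple of $k$. Similarly, type-$\ell$ orbits have size a multiple of $\ell$.

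\emph{Step 2: both factor types occur, and the count forces one natural orbit.} Take $d=k$ (the case $d=\ell$ is symmetric) and let $T$ be the perfect residual of $\theta^{-1}(\Alt_k\times1)\cap S$, so $\theta(T)=\Alt_k\times 1$. Suppose $T$ acted trivially on every orbit of type $k$. Then on each orbit of type $0$ or $\ell$, the image of $T$ should be solvable, or at least should have no composition factor $\Alt_k$. The reason is that the stabilizers there map onto a subgroup containing $\Alt_k\times1$, so $T$ acts inside the image of $S_xK$. Faithfulness would then give a contradiction with $\theta(T)\cong\Alt_k$. Hence an orbit of type $k$ exists, and likewise one of type $\ell$. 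These orbits already use at least $k+\ell$ points. Since $|X|<2(k+\ell)$ and orbit sizes are multiples of $k$ or $\ell$, one factor $d$ has \emph{exactly one} orbit $O_d$ of its type, and that orbit has size exactly $d$. On it, $S$ acts through $G\to\Alt_d$ in the natural action.

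\emph{Step 3: build $\Gamma$.} Let $T_d$ be the perfect residual of $\theta^{-1}(\Alt_d)\cap S$, where $\Alt_d$ denotes the relevant factor with the other coordinate trivial. On every orbit other than $O_d$, I will show that $T_d$ acts trivially. On orbits of the other factor's type, $T_d$ lies in the kernel of the map to the other factor's natural action, so its image is a perfect group inside a group controlled by $K$. On type-$0$ orbits the same reasoning applies. The small degree should then force the image of $K$ there to be solvable, using the facts on subgroups of $\Sym_r$ for $r\le4$ and the index bound of Lemma~\ref{lem:small-index}, so perfectness kills it. Faithfulness then makes $T_d$ act faithfully on $O_d$ alone, which gives $T_d\cong\Alt_d$ via $\theta$. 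Set $\Gamma=T_d$.

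The main obstacle is controlling the action of $K$ on orbits of type $0$, and more generally proving that a perfect subgroup mapping trivially under $\theta$ acts trivially everywhere. I would first try to show that $K\cap S$ acts trivially on every orbit, by comparing $|X|<2(k+\ell)$ with the minimal degree of any faithful action of an extension involving nonabelian simple sections. If that fails, I would fall back on a counting argument: combine the at least $k+\ell$ points spent on the two natural orbits with the minimal-index bound, so that any remaining orbit is too small to support a nonsolvable $K$-action.
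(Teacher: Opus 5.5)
\textbf{There is a genuine gap, at the point you flagged.} Passing to the perfect residual $S=H^{(\infty)}$ does not control $\ker\theta$. An orbit of type $0$ can carry a nontrivial action of a perfect subgroup of $\ker\theta\cap S$, and neither a derived-series argument nor a small-degree argument removes it.

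Here is a concrete counterexample to Step~3. Let $H=\Alt_k\times\Alt_\ell\times\Alt_5$ act on $X=[k]\sqcup[\ell]\sqcup[5]$ by the three natural actions, and let $\theta$ be the projection onto the first two factors. This action is faithful, and $|X|=k+\ell+5<2(k+\ell)$. Here $S=H$, and the $5$-point orbit has type $0$. The group $T_k=\Alt_k\times1\times\Alt_5$ is already perfect. It moves the $5$-point orbit, and $\theta|_{T_k}$ has kernel $\Alt_5$. Replacing $\Alt_5$ by a second copy of $\Alt_k$ on a further $k$ points gives the same failure with a type-$0$ orbit of size $k$.

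Consequences for your proposal:
\begin{itemize}
\item The claim in Step~3 that $T_d$ acts trivially off $O_d$ is false.
\item Both fallbacks are false as well: $\ker\theta\cap S$ need not act trivially on every orbit, and the leftover orbits are not too small to support a nonsolvable action.
\item The justification in Step~2 is empty on type-$0$ orbits, because there $S_xK=S$, so ``$T$ acts inside the image of $S_xK$'' imposes no constraint.
\end{itemize}

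\textbf{The missing idea is a different reduction.} Choose $\Lambda\le H$ inclusion-minimal with $\theta(\Lambda)=G$, and put $N=\ker(\theta|_\Lambda)$. If $x$ is not fixed by $\Lambda$, then $\Lambda_x<\Lambda$, so minimality forces $\theta(\Lambda_x)<G$. Hence there are no type-$0$ orbits at all. Your Step~1 computation then applies to every nonfixed orbit:
\begin{itemize}
\item $|\Lambda x|=[G:\theta(\Lambda_x)]\,[N:N\cap\Lambda_x]=dr$ with $r\le4$.
\item The $N$-orbits inside $\Lambda x$ form $d$ blocks of size $r$.
\item $\Lambda$ permutes these blocks by the natural action of the assigned factor.
\end{itemize}

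With this in place, both of your remaining steps go through, as in the paper's proof.
\begin{itemize}
\item \emph{Existence of both types.} If no orbit were assigned to some factor, the preimage $K$ of that factor would fix every block setwise. By faithfulness, $K$ would embed in a product of groups $\Sym_r$ with $r\le4$, so $K$ would be solvable, which is impossible.
\item \emph{Construction of $\Gamma$.} Once the count yields a unique assigned orbit of size $d$, take $\Gamma=K^{(3)}$ (or the perfect residual of $K$).
\begin{itemize}
\item $\Gamma$ fixes every orbit of the other factor.
\item $\theta(\Gamma)=\Alt_d$ because $\Alt_d$ is perfect.
\item $\Gamma\cap N$ is trivial, because here $r=1$, so $N$ fixes the natural orbit pointwise; then faithfulness applies.
\end{itemize}
\end{itemize}
In the example above, the minimality step simply discards the $\Alt_5$ factor.
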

\begin{proof}
Choose an inclusion-minimal subgroup $\Lambda\le H$ still surjecting
onto $G$, and write $N=\ker(\theta|_\Lambda)$. The restricted action
of $\Lambda$ remains faithful. For a nonfixed point $x$, let
$H_x=\Lambda_x$ and $J_x=\theta(H_x)$. Since $H_x$ is proper,
minimality makes $J_x$ proper as well. The index formula gives
\begin{equation}\label{eq:product-orbit-fibers}
 |\Lambda x|=[\Lambda:H_x]
   =[G:J_x]\,[N:N\cap H_x].
\end{equation}
Lemma~\ref{lem:product-index} therefore assigns each nonfixed
$\Lambda$-orbit to one of the two factors: its size is $dr$, where
$d$ is the degree of that factor and
$r=[N:N\cap H_x]\ge1$. Since
$dr<2(k+\ell)\le4d+2$ and $d\ge11$, necessarily $r\le4$.

More precisely, the $N$-orbits within $\Lambda x$ form $d$ blocks,
each of size $r$. Indeed, $NH_x=\theta^{-1}(J_x)$, so these blocks
are indexed by $\Lambda/(NH_x)$. Their induced $\Lambda$-action
is the coset action of $G$ on $G/J_x$, which is the natural action of the assigned factor.
The other factor of $G$ fixes each of these blocks.

Each factor must have at least one orbit assigned to it. Otherwise,
let $K$ be the full preimage in $\Lambda$ of that factor, with the
other factor equal to the identity. On every nonfixed orbit, $K$
would fix every $N$-block setwise. Its action would consequently embed
into a product of symmetric groups on sets of size at most four;
globally fixed points contribute trivial factors. Faithfulness would
make $K$ solvable, since each $\Sym_r$ for $r\le4$ is solvable. This is
impossible because $K$ surjects onto the nonabelian simple alternating
factor.

Label the two factors by $1,2$, with degrees $d_1=k$ and $d_2=\ell$,
even when $k=\ell$. Let $a_1,a_2$ be the total numbers of points
in the orbits assigned to them. We have
$a_1+a_2\le|X|<2d_1+2d_2$. Thus for one degree $d$ its assigned
points number less than $2d$. This positive number is a sum of positive
integer multiples of $d$, so it is exactly $d$. There is exactly one
orbit assigned to this factor, and it has $r=1$.

Let $K$ again denote the full preimage of this factor in $\Lambda$.
On every orbit assigned to the other factor, its image lies in a
direct product of groups $\Sym_r$ with $r\le4$: the blocks are
fixed setwise, so there is no additional permutation of blocks. Such
an image has derived
length at most three: for $\Sym_4$ the successive derived subgroups
are $\Alt_4$, the Klein four group, and the identity. Therefore
$\Gamma=K^{(3)}$, the third derived subgroup of $K$, fixes all these
other-factor orbits. It also fixes the globally fixed points.

Because $\Alt_d$ is perfect, $\theta(\Gamma)=\Alt_d$. On the unique
orbit assigned to this factor the fibers have size one, so $N$ acts
trivially and the action is the natural action of the same semantic
$\Alt_d$. Explicitly, if $J_x$ fixes point $a$ in the selected
factor $i$, label $\lambda x$ by $\theta_i(\lambda)(a)$. The
stabilizer identity makes this labeling well defined and bijective,
and $\gamma\in\Gamma$ acts on the labels by
$\theta_i(\gamma)$. Any element of $\Gamma\cap N$ therefore fixes all of $X$;
faithfulness implies that it is the identity. Thus
$\theta|_\Gamma$ is the required isomorphism. Finally any
$\Gamma$-orbit is contained in an invariant part $U$ or $V$, proving
the signed-orbit assertion.
\end{proof}

\begin{corollary}[Minimum degree of the product action]\label{cor:product-capacity}
Under the group and degree hypotheses of
Lemma~\ref{lem:isolation}, every faithful action of a finite group
surjecting onto $\Alt_k\times\Alt_\ell$ has at least $k+\ell$ points.
In particular, if a group of actual interval diagrams surjects onto
this product, its faithful signed through-channel action satisfies
$|U|+|V|\ge k+\ell$.
\end{corollary}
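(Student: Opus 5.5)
The plan is to argue by contradiction, reusing the orbit analysis from the proof of Lemma~\ref{lem:isolation} rather than its final conclusion. Suppose $H$ acts faithfully on $X$, surjects onto $G=\Alt_k\times\Alt_\ell$ via $\theta$, and $|X|<k+\ell$. Since $k+\ell<2(k+\ell)$, the hypothesis of Lemma~\ref{lem:isolation} holds, so we may pass to the inclusion-minimal subgroup $\Lambda$ surjecting onto $G$ and use the orbit structure obtained in that proof.

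The key steps are as follows. First, by Lemma~\ref{lem:product-index} and \eqref{eq:product-orbit-fibers}, every nonfixed $\Lambda$-orbit is assigned to one factor and has size $dr$, where $d\in\{k,\ell\}$ is that factor's degree and $r\ge1$. Second, as shown in the isolation proof, each factor must have at least one assigned orbit. Otherwise the preimage of that factor would act on all orbits inside products of $\Sym_r$ with $r\le4$. It would then be solvable by faithfulness, which is impossible because it surjects onto a nonabelian simple group. Third, the orbits assigned to factor $1$ therefore contribute at least $k$ points and those assigned to factor $2$ at least $\ell$. Since orbits are disjoint, $|X|\ge k+\ell$, which is the contradiction. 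Nothing in these steps actually needs $|X|<2(k+\ell)$ beyond invoking the lemmas. If one prefers not to rely on the intermediate claims inside a proof, the same counting can be run directly. Any nonfixed orbit has point stabilizer image $J_x$ proper in $G$. Either both projections of $J_x$ are proper, giving an orbit of size at least $k\ell\ge k+\ell$ at once, or Lemma~\ref{lem:product-index}'s simplicity argument applies.

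For the ``in particular'' clause, take the group $eMe$ from Lemma~\ref{lem:group-extraction}, which surjects onto $G$ through $\theta$. By Lemma~\ref{lem:unit-channels} it acts faithfully on $U\sqcup V$ through $g\mapsto(\alpha_g,\beta_g^{-1})$. Applying the first part with $X=U\sqcup V$ gives $|U|+|V|\ge k+\ell$.

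The main obstacle is the case where the orbit-size bound $r\le4$ from the isolation proof is unavailable, namely when some orbit is large. There I would avoid the fiber analysis entirely. The key observation is that the solvability argument only needs each factor's preimage to act nontrivially on the blocks of some orbit. If an orbit has both projections of $J_x$ proper, that orbit alone already has at least $k\ell\ge k+\ell$ points. Otherwise the orbit is assigned to a single factor with at least $d$ blocks, and the disjoint-count argument goes through.
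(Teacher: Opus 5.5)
Your proposal is correct and follows essentially the paper's proof: assume $|X|<k+\ell$, reuse the orbit-and-block analysis from the proof of Lemma~\ref{lem:isolation} to get one assigned orbit per factor, of sizes at least $k$ and $\ell$, and then apply this to the faithful signed channel action of Lemma~\ref{lem:unit-channels}. Your closing ``obstacle'' paragraph is unnecessary: under $|X|<k+\ell$ every orbit has fewer than $2(k+\ell)$ points, so the bound $r\le4$ needed in the solvability step always holds.
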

\begin{proof}
An action with fewer than $k+\ell$ points is below the threshold
$2(k+\ell)$. Apply the orbit-and-block analysis in the proof of
Lemma~\ref{lem:isolation}: each factor must have an assigned orbit,
and those two disjoint orbits have sizes at least $k$ and $\ell$.
This contradicts the assumed size. The interval-diagram statement
uses the faithful signed channel action of
Lemma~\ref{lem:unit-channels}.
\end{proof}

\section{Permutation queries and the lower bounds}\label{sec:lower}

\subsection{Fixed answer continuations}

For an unknown $\pi\in\Alt_d$, consider a \emph{query controller}.
Its initial instruction is a query index or a terminal outcome. Its
answer continuation is a fixed function
\[
 t:[d]\longrightarrow[d]\sqcup\{\acc,\rej\}.
\]
An instruction $a\in[d]$ obtains the raw answer $\pi(a)$ and then
executes $t(\pi(a))$. Thus the answer retains neither the caller nor
a separate copy of its query index. The terminal instructions accept
and reject; an infinite sequence of queries rejects. Fixed helper
computations can be folded into the instructions, with a fixed helper
loop counted as rejection. The first answer uses the same function
$t$ as every later answer.

\begin{lemma}[From one channel orbit to queries]\label{lem:query-interface}
Suppose a subgroup of the interval-diagram group $eMe$, with semantic
parameter $\pi\in\Alt_d$, acts naturally on exactly one orbit of
$d$ signed through channels and fixes all other channels. The natural
action is that of the same semantic permutation $\pi$. After fixing
ordinary left and right contexts, acceptance on the corresponding
middle-word representatives is decided by a query controller. Its raw
operation is $a\mapsto\pi(a)$ when the moving orbit is forward, and
$a\mapsto\pi^{-1}(a)$ when it is backward.
\end{lemma}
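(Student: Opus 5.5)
We prove the statement by simulating the whole run on $\lambda w\rho$, with $w$ a middle word representing an element $g$ of the subgroup, and cutting it at the visits to the two boundaries of the middle interval. By Lemma~\ref{lem:diagrams}, acceptance depends only on the outside regions and on the diagram $D_w=g$. By Lemma~\ref{lem:unit-channels}, $g$ has the return rows $L_e,R_e$ and the failure rows of $e$, and its through rows are $F_g=\alpha_g\circ F_e$ and $B_g=\beta_g\circ B_e$. Every constituent of this description, apart from $\alpha_g,\beta_g$, is independent of the semantic parameter.

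The key steps are as follows. First, by hypothesis $\alpha_g$ and $\beta_g$ fix every channel outside the moving orbit $O$. So every entry into the middle interval is of one of three kinds:
\begin{enumerate}
\item it returns or fails exactly as in $e$;
\item it exits through a fixed channel, with a fixed outcome;
\item it exits through a channel $F_e(q)$ or $B_e(q)$ lying in $O$, and the actual exit is the image of that channel under the semantic permutation.
\end{enumerate}

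Second, we label $O$ by $[d]$ using the labeling from the proof of Lemma~\ref{lem:isolation}, so that $\gamma$ acts on labels by $\theta_i(\gamma)=\pi$.

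Suppose $O\subseteq U$, the forward orbit. The action homomorphism $g\mapsto(\alpha_g,\beta_g^{-1})$ and the identification with $\theta$ then give raw answer $a\mapsto\pi(a)$. Here $a$ is the label of $F_e(q)$, which is the channel the entry would have used under $e$.

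Suppose instead $O\subseteq V$. The second coordinate is recorded as $\beta_g^{-1}$, so the physical backward exit is $\beta_g=$ (label action)$^{-1}$, giving $a\mapsto\pi^{-1}(a)$. I expect this sign bookkeeping, reconciling the convention~\eqref{eq:product} with which of $\alpha$ or $\beta^{-1}$ matches $\theta$, to be the main obstacle.

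Third, we build the controller. Everything between two consecutive queries is a deterministic computation that does not depend on $\pi$. It runs through the fixed outside contexts and the fixed non-orbit rows of the middle, namely returns, failures and fixed channels. This includes the initial segment from $(s,0)$.

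We let the initial instruction be the outcome of this fixed computation started from $(s,0)$. It is one of four things:
\begin{enumerate}
\item accept, on reaching $(f,0)$;
\item reject, on a halt or failure;
\item reject, on a $\pi$-independent infinite loop, which is a fixed helper loop;
\item the first entry whose $e$-channel lies in $O$, recorded as that channel's label $a$.
\end{enumerate}

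For the continuation, we observe that an answer $b$ means the head has just exited the interval on the fixed side of $O$, in the fixed state named by channel $b$. From that configuration the remainder, up to the next orbit query, is again $\pi$-independent. We define $t(b)$ to be the resulting label, or terminal outcome, by the same four-way classification.

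Crucially, $t$ depends only on $b$, since the exit configuration is determined by the signed channel $b$ alone. So the first and later answers use the same $t$.

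Finally, an infinite run either has finitely many orbit queries, in which case it ends in a fixed helper loop and is counted as rejection, or it has infinitely many, which matches the controller's convention that an infinite sequence of queries rejects. Acceptance of the run is therefore exactly acceptance by the controller.
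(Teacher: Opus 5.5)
Your proposal is correct and follows essentially the same route as the paper. You replace each visit to the middle interval by its exit response from Lemma~\ref{lem:unit-channels}, fold the fixed regions, returns, failures and fixed channels into a $\pi$-independent network, and read orbit exits as $\pi(a)$ or $\pi^{-1}(a)$ via the action $(\alpha_h,\beta_h^{-1})$; you then note that each continuation is determined by the answer's exit configuration alone, so the first and later answers share one function $t$, exactly as in the paper.
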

\begin{proof}
Choose one nonempty word for each group element. The return functions,
failure domains, and through domains are fixed by
Lemma~\ref{lem:unit-channels}. A forward through row has the form
$F_h(q)=\alpha_h(F_e(q))$, and a backward row has the form
$B_h(q)=\beta_h(B_e(q))$. Outside the moving orbit these responses
are constant. On that orbit, the forward response is $\pi(a)$ or
the backward response is $\pi^{-1}(a)$, because the homomorphic
channel action is $(\alpha_h,\beta_h^{-1})$.

Consider the left region consisting of the actual left endmarker and
the fixed left context, and the right region consisting of the fixed
right context and the actual right endmarker. Replace each visit to
the middle interval by its complete exit response. Fold the two fixed
regions, all return rows, and all constant through responses into a
finite deterministic network. From a boundary configuration, follow
this network until it accepts, rejects, enters a fixed cycle, or asks
for a variable through response. In the last case its query index is
fixed by that boundary configuration. Failure or a cycle gives the
reject instruction.

A variable forward answer is a physical state at the first cell of
the fixed right region. A variable backward answer is a physical
state at the last cell of the fixed left region. Each natural-orbit
label determines one such state, and all variable answers use the
same side because the orbit lies wholly in $U$ or wholly in $V$.
The continuation therefore depends only on the answer label. This
also covers empty ordinary contexts, since their regions still contain
the corresponding endmarkers. Any fixed processing of a raw answer
is included in its one continuation.

Start the same folding from the actual initial configuration. It
either has a fixed terminal outcome or reaches a first query. The
first raw answer arrives at exactly the same answer configuration as
later occurrences of that label, with no initialization tag. Although
middle representatives can have different lengths, positions inside
each fixed region are identified relative to its boundary. The finite
control has no clock or absolute-position register that could retain
traversal length. These relative configurations and their transition
rules are identical across representatives. The resulting initial
instruction and fixed answer function give the stated controller,
including all rejecting loops.
\end{proof}

\subsection{An inverse-query obstruction}

\begin{lemma}[Even completion]\label{lem:even-completion}
For $d\ge5$, every prescription of at most three distinct
input--output pairs with distinct inputs and distinct outputs
extends to an even permutation of $[d]$.
\end{lemma}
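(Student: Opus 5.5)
The plan is to first build an arbitrary bijection extending the prescription, and then repair its parity using points outside the prescription. Let the prescription consist of pairs $(a_i,b_i)$ for $i\le s$, with $s\le3$, where the $a_i$ are distinct and the $b_i$ are distinct. Put $S=\{a_1,\ldots,a_s\}$ and $T=\{b_1,\ldots,b_s\}$. The complements $[d]\setminus S$ and $[d]\setminus T$ both have $d-s\ge d-3\ge2$ elements. Choosing any bijection between these complements and combining it with the prescribed pairs gives a permutation $\sigma\in\Sym_d$ that extends the prescription.

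If $\sigma$ is even, we are done. Otherwise, pick two distinct points $x,y\in[d]\setminus S$; this is possible because $d-s\ge2$. Set $\sigma'=\sigma\circ(x\ y)$. The transposition fixes every point of $S$, so $\sigma'(a_i)=\sigma(a_i)=b_i$ for every $i$. Its sign is opposite to that of $\sigma$, so $\sigma'$ is even.

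The only point that needs care is that two free input points remain after the prescription. The hypothesis $d\ge5$ gives at least two such points. It also leaves room even in the extreme case $s=3$, where the complements have $d-3\ge2$ elements. Distinctness of the inputs and of the outputs ensures that the partial map is injective, so a bijective extension exists in the first place. Nothing else is an obstacle: this is a routine parity-correction argument, and the bound $d\ge5$ is more than enough.
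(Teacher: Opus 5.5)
Your proposal is correct and is essentially the paper's proof: complete the injection arbitrarily to a permutation and, if it is odd, compose with a transposition of two unassigned inputs. As in the paper, $d-3\ge2$ guarantees that two such inputs exist.
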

\begin{proof}
Complete the prescribed injection arbitrarily to a permutation.
If it is odd, interchange the images of two unassigned inputs.
There are at least $d-3\ge2$ such inputs; this changes the parity
without changing a prescribed pair.
\end{proof}

\begin{lemma}[A limit on permutation queries]\label{lem:inverse-obstruction}
For $d\ge5$, a query controller cannot recognize
\begin{equation}\label{eq:inverse-test}
                            \pi^{-1}(i)\in T
\end{equation}
when $i\in[d]$ and $3\le|T|\le d-2$. Infinite runs count as
rejection, as in the automaton model.
\end{lemma}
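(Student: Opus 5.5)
The plan is an adversary argument. We produce two even permutations $\pi,\pi'$ on which the controller behaves identically, while $\pi^{-1}(i)\in T$ and $\pi'^{-1}(i)\notin T$. The run of a query controller on $\pi$ is the sequence $a_0$, $b_1=\pi(a_0)$, $a_1=t(b_1)$, $b_2=\pi(a_1),\dots$, stopping at a terminal value. Its outcome therefore depends only on the values of $\pi$ at the inputs actually queried. Any two permutations that agree on those inputs produce the same run and the same outcome, including a rejecting infinite loop.

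Since the answer function is fixed, the run from any answer label is a fixed function of the remaining values of $\pi$. In particular, the run is eventually periodic once an answer label repeats. The aim is to prescribe at most two pairs $a\mapsto\pi(a)$ that force the run to end (terminal outcome or closed cycle) after querying at most those inputs. We then add a third pair $x\mapsto i$ with $x$ unqueried, choosing $x$ once in $T$ and once outside $T$. Lemma~\ref{lem:even-completion} extends each three-pair prescription to an element of $\Alt_d$.

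The steps are as follows.
\begin{enumerate}
\item If the initial instruction is terminal, the outcome is constant. Both memberships occur, because $\emptyset\ne T\ne[d]$.
\item Otherwise the first query is $a_0$. Try to choose an answer $b\neq i$ with $t(b)$ terminal, or with $t(b)=a_0$, so that the prescription $\pi(a_0)=b$ closes a loop.
\item If no such $b$ exists, then $t(b)$ is a query index $a_1\neq a_0$ for every $b\ne i$. Prescribe $\pi(a_0)=b$ and $\pi(a_1)=b'$ with $b,b'\neq i$, choosing $b'$ so that $t(b')\in\{a_0,a_1\}$ or is terminal. The run then cycles or halts on inputs $\{a_0,a_1\}$ without ever producing the answer $i$. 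This is possible because $d\ge5$ leaves several answer values with which to steer $t$.
\item Place $x=\pi^{-1}(i)$ among the remaining unqueried inputs. Since $|T|\ge3$, some unqueried element of $T$ survives the removal of at most two queried inputs.
\end{enumerate}

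The main obstacle is guaranteeing an unqueried input outside $T$, since $|[d]\setminus T|$ may be only $2$. If both queried inputs lie in the complement of $T$, the freedom must come from a different route. We would exploit the fixed continuation $t(i)$: when the answer $i$ is produced, the controller forgets which query produced it. So we instead prescribe $\pi(a_0)=i$ in one instance and $\pi(a')=i$ in the other, for two queries $a_0\in T$ and $a'\notin T$ that both lead into the same continuation $t(i)$. After that point the two runs coincide, provided the subsequent prescriptions agree; keeping everything within three prescribed pairs keeps Lemma~\ref{lem:even-completion} applicable.

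A careful case split is needed on whether the run's forced inputs meet $T$ or its complement. We would use $3\le|T|\le d-2$ to ensure that, in every case, at least one of the two strategies (avoiding the answer $i$, or merging through the answer $i$) leaves the required freedom. This case analysis is where most of the work lies.
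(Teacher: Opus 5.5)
Your steps 1 and 2 are sound and match the paper's first move. Step 3 is false as stated, and it is exactly where the content of the lemma lies. Under its hypothesis every $b\ne i$ has $t(b)\in[d]\setminus\{a_0\}$. So the requirement that $t(b')\in\{a_0,a_1\}$ or be terminal reduces to $t(b')=t(b)=a_1$ with $b'\ne b$, since $\pi$ is injective. In other words, you need $t$ to be non-injective on $[d]\setminus\{i\}$.

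Nothing forces this. The function $t$ is fixed by the controller, so there is nothing to ``steer''. It may map the $d-1$ answers $j\ne i$ bijectively onto the $d-1$ indices other than $a_0$. In that case no $\pi$ lets the run avoid the answer $i$. The map $a\mapsto t(\pi(a))$ on $[d]\setminus\{\pi^{-1}(i)\}$ is injective and misses $a_0$, so the run from $a_0$ can never close a cycle and must eventually query $\pi^{-1}(i)$.

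Everything therefore rests on your ``merging through $i$'' route. You leave it as an unperformed case split, and as described it is not sound. The two instances differ at $a_0$: one has $\pi(a_0)=i$, the other $\pi(a_0)=j$ with $t(j)=a'$. So if $t(i)$ is a further query, the runs can separate. You also assume $a_0\in T$, although $a_0$ is chosen by the controller.

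The missing step is to determine $t(i)$ before anything else.
\begin{enumerate}
\item Once all $j\ne i$ continue with queries, $t(i)$ is the only instruction that can be $\acc$. Positive instances exist, so $t(i)=\acc$.
\item Then $a_0\in T$; otherwise $\pi(a_0)=i$ is an accepted negative instance.
\item Now merging is a clean two-pair argument. For $j\ne i$, if $t(j)\notin T$, then $\pi(a_0)=j,\ \pi(t(j))=i$ is an accepted negative instance. If $t(j)=a_0$, then $\pi(a_0)=j,\ \pi(c)=i$ with $c\in T\setminus\{a_0\}$ is a looping positive instance.
\item Hence $t$ maps the $d-1$ answers $j\ne i$ into $T\setminus\{a_0\}$, which has at most $d-3$ elements. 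Only now is the collision $t(j_1)=t(j_2)=b$ of your step 3 guaranteed.
\item The prescription $\pi(a_0)=j_1,\ \pi(b)=j_2,\ \pi(c)=i$ with $c\in T\setminus\{a_0,b\}$ then loops on a positive instance.
\end{enumerate}
This is the paper's proof.

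The obstacle you single out is also misplaced. A looping run rejects, so only a positive completion is needed there, and $|T|\ge3$ supplies it. No unqueried input outside $T$ is required. The bound $|T|\le d-2$ is used instead in step 2, to place the preimage of $i$ outside $T$ when $a_0\notin T$.
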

\begin{proof}
The predicate is nonconstant, so the initial fixed computation must
query some index $a$. For each first answer $j\ne i$, the condition
$\pi(a)=j$ is compatible with either truth value in \eqref{eq:inverse-test}: prescribe
the preimage of $i$ in $T\setminus\{a\}$ for an instance satisfying the predicate,
or in $([d]\setminus T)\setminus\{a\}$ for one that does not.
Both sets are nonempty. These are two-pair prescriptions and hence
have even completions by Lemma~\ref{lem:even-completion}.

Thus no answer $j\ne i$ may accept, reject, or follow only a fixed
helper loop. It must query an index $f(j)$. If $a\notin T$, the
first answer $i$ occurs on a negative input, so its instruction may
not accept either. There would then be no accepting instruction at
any answer control. A finite accepting run would be impossible,
although positive inputs exist. Therefore $a\in T$. Since the
only answer that could accept is $i$, its instruction must accept.
Otherwise none of the answer controls could terminate with acceptance.

If $f(j)=a$ for some $j\ne i$, choose any $c\in T\setminus\{a\}$
and an even permutation satisfying $\pi(a)=j,\pi(c)=i$.
The predicate holds for this permutation, but the controller repeats
query $a$ and answer $j$ forever. Thus $f(j)\ne a$. If $f(j)\notin T$, prescribe
$\pi(a)=j$ and $\pi(f(j))=i$. The computation accepts after the
second query, although the prescribed preimage of $i$ is outside
$T$. Hence necessarily
\begin{equation}\label{eq:next-query}
                         f(j)\in T\setminus\{a\}
                              \quad(j\ne i).
\end{equation}

There are $d-1$ arguments $j\ne i$ but fewer than $d-1$ possible
values in \eqref{eq:next-query}, so two distinct $j_1,j_2\ne i$ have
$f(j_1)=f(j_2)=b$. Choose $c\in T\setminus\{a,b\}$, possible
because $|T|\ge3$. Lemma~\ref{lem:even-completion} supplies an even
permutation with
\[
                    \pi(a)=j_1,\qquad\pi(b)=j_2,\qquad\pi(c)=i.
\]
It satisfies \eqref{eq:inverse-test}, but its run asks $a$, then $b$, then $b$
forever, and hence rejects. This is the required contradiction.
\end{proof}

\subsection{The complement lower bound}

\begin{theorem}\label{thm:complement-lower}
For $k\ge11$ and $\ell\in\{k,k+1\}$, every 2DFA recognizing
$\overline{L_{k,\ell}}$ has at least $2(k+\ell)$ states.
\end{theorem}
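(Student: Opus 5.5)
Suppose, for contradiction, that some 2DFA $B$ with $m<2(k+\ell)$ states recognizes $\overline{L_{k,\ell}}$. The plan is to run the group machinery of Section~\ref{sec:diagrams} on $B$. This will reduce acceptance on suitable words to a query controller, and Lemma~\ref{lem:inverse-obstruction} will then rule that controller out.

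\textbf{Extracting the isolated factor.} Form the semigroup $M$ of diagrams of $B$. By Lemma~\ref{lem:semantic-product} it maps onto $G=\Alt_k\times\Alt_\ell$; the proof of that lemma separates pairs by acceptance in $L_{k,\ell}$, and hence equally by acceptance in its complement. Lemma~\ref{lem:group-extraction} supplies an idempotent $e$ such that $H=eMe$ is a group surjecting onto $G$. By Lemma~\ref{lem:unit-channels}, $H$ acts faithfully on $X=U\sqcup V$, and both parts are invariant. By Lemma~\ref{lem:capacity}, $|X|\le m<2(k+\ell)$. Lemma~\ref{lem:isolation} then yields $d\in\{k,\ell\}$ and a subgroup $\Gamma\cong\Alt_d$ with the following properties:
\begin{itemize}[nosep]
\item $\Gamma$ is mapped isomorphically onto that semantic factor, with the other coordinate trivial;
\item $\Gamma$ acts naturally on a single orbit of $d$ channels, lying wholly in $U$ or wholly in $V$;
\item $\Gamma$ fixes every other channel.
\end{itemize}
Lemma~\ref{lem:query-interface} then says that, for any fixed ordinary contexts $\lambda,\rho$, acceptance of $\lambda w_\pi\rho$ is decided by a query controller. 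Here $w_\pi$ ranges over the representatives in $\Gamma$, and the controller's raw operation is $\pi$ or $\pi^{-1}$ according to the orbit's sign. Since $\pi\mapsto\pi^{-1}$ is a bijection of $\Alt_d$, I may relabel the unknown as the raw permutation. It therefore suffices to choose contexts for which membership in the complement becomes $\rho_0^{-1}(i)\in T$ with $|T|=d-2\ge3$, where $\rho_0$ denotes the raw permutation.

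\textbf{Choosing contexts.} Both contexts will be paired singular words supplied by Lemma~\ref{lem:paired-generation}, with $\eta$ occupying the inactive bank so that the points $1,2$ are preserved.
\begin{itemize}[nosep]
\item \emph{Forward case $d=k$.} The middle pair is $(\pi,\mathrm{id})$. Choose $f_\lambda$ singular with distinct values $f_\lambda(1)=a_1$ and $f_\lambda(2)=a_2$. Let $f_\rho$ send $i$ to $1$ and every other point to $2$. Take $g_\lambda=g_\rho=\eta_\ell$. By Lemma~\ref{lem:witness-language}, the source accepts iff $\pi(a_1)=i$ or $\pi(a_2)=i$, that is, iff $\pi^{-1}(i)\in\{a_1,a_2\}$. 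The complement is then $\pi^{-1}(i)\in T=[k]\setminus\{a_1,a_2\}$.
\item \emph{Backward case $d=\ell$.} Choose $f_\lambda=\eta_k$ and $f_\rho=\eta_k$, so that the reverse scans start at $1$ and at $2$. Choose $g_\rho$ singular with distinct $g_\rho(1),g_\rho(2)$. Let $g_\lambda$ send $i$ to $1$ and every other point to $2$. Acceptance again says that the preimage of $i$ under the middle $g$ lies in a two-element set, and the complement has $|T|=\ell-2$.
\end{itemize}
In each case the middle pair's coordinate matches the raw channel operation, up to the inversion already absorbed in the relabelling.

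\textbf{Conclusion and main obstacle.} The reduction produces a query controller deciding $\pi^{-1}(i)\in T$ with $3\le|T|\le d-2$, which Lemma~\ref{lem:inverse-obstruction} forbids. Hence $m\ge2(k+\ell)$. The main obstacle I expect is bookkeeping in the reduction. One point is to confirm that the raw operation in Lemma~\ref{lem:query-interface} is $\pi$ versus $\pi^{-1}$ exactly as claimed under the reversed product~\eqref{eq:product}. The other is to confirm that the chosen contexts really make the predicate depend only on a preimage, not on an image. An image predicate would be decidable by a single query, so it would give no contradiction. If the orientation turns out wrong in one case, I would swap the roles of $\lambda$ and $\rho$ there. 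That exchanges "image of two points" with "preimage of one point", and the two-start structure of Lemma~\ref{lem:witness-language} must supply the preimage form.
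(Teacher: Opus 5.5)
There is a genuine gap in the context-choice step; the surrounding architecture (Lemmas~\ref{lem:semantic-product}, \ref{lem:group-extraction}, \ref{lem:unit-channels}, \ref{lem:capacity}, \ref{lem:isolation}, then \ref{lem:query-interface} and \ref{lem:inverse-obstruction}) matches the paper. The problem is the relabelling remark: ``$\pi\mapsto\pi^{-1}$ is a bijection of $\Alt_d$, so I may take the unknown to be the raw permutation.'' It does not absorb the inversion. It leaves the set of instances unchanged, but it turns a preimage predicate into an image predicate.

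Lemma~\ref{lem:isolation} does not say whether the moving orbit lies in $U$ or in $V$, independently of which factor is isolated. So all four factor/orientation combinations must be handled. Your contexts use two start points and the one-point test $\{i\}$, and they work only in the two matched combinations.
\begin{itemize}
\item In the forward factor, your complement predicate is $\pi^{-1}(i)\in T$. If the orbit lies in $V$, the raw operation is $\rho_0=\pi^{-1}$, and the predicate reads $\rho_0(i)\in T$.
\item In the backward factor, your complement predicate is $g^{-1}(i)\in T$, that is, $\sigma(i)\in T$. If the orbit lies in $U$, the raw operation is $\sigma$, and again this is an image predicate.
\end{itemize}
An image predicate is decided by the controller that queries $i$ and accepts exactly on answers in $T$. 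So Lemma~\ref{lem:inverse-obstruction} gives no contradiction in these two cases.

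The missing idea is that the shape of the context must depend on the orientation, as in the paper. In the mismatched cases, collapse the selector to a single start, $x=y=i$. This means $f_\lambda(1)=f_\lambda(2)=i$ in the forward bank, or $g_\rho(1)=g_\rho(2)=i$ in the backward bank. Then let the output test accept a two-element set $S$. The source accepts iff $\pi(i)\in S$, respectively iff $g(i)=\sigma^{-1}(i)\in S$. The complement is therefore $\rho_0^{-1}(i)\in[d]\setminus S$ for the raw permutation $\rho_0=\pi^{-1}$, respectively $\rho_0=\sigma$. This set has size $d-2$, and Lemma~\ref{lem:inverse-obstruction} applies.

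Your fallback of swapping $\lambda$ and $\rho$ does not do this, as stated. The start points always come from the context on the entry side of each scan: $\lambda$ for forward scans, $\rho$ for backward scans. Moreover, ``$a_1$ or $a_2$ is mapped to $i$'' is already the same predicate as ``the preimage of $i$ lies in $\{a_1,a_2\}$'', so nothing is exchanged. What must be swapped are the sizes: one start point and two accepting outputs.
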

\begin{proof}
Suppose a target for the complement has $m<2(k+\ell)$ states. Lemmas~\ref{lem:semantic-product}, \ref{lem:group-extraction},
\ref{lem:unit-channels}, and~\ref{lem:capacity} give a faithful group of signed channels mapping onto $\Alt_k\times\Alt_\ell$, with at most $m$ channels. Lemma~\ref{lem:isolation} supplies a subgroup mapping isomorphically onto one semantic factor, with exactly one natural moving orbit for that factor, and fixing all other channels. Its other semantic coordinate is the identity. Every element has an actual nonempty word representative.

We now choose contexts that turn acceptance into the query test of
Lemma~\ref{lem:inverse-obstruction}. Each transformation used in
these contexts is singular, so the contexts exist by
Lemma~\ref{lem:paired-generation}. In an inactive coordinate use the rank-two idempotent $\eta_d$ that fixes $1,2$ and sends every other point to $2$. It need not act as identity on all points. For each active coordinate of degree $d$, a selector with
$a(1)=x,a(2)=y$ can be chosen to send every index other than $1$
to $y$. An output test for $S\subseteq[d]$ sends $S$ to $1$ and
its complement to $2$. Thus each active map also has rank at most
two. All these maps are singular because $d\ge11$.

\emph{The subgroup maps onto the forward factor.} The semantic middle pair is $(\pi,\id)$, with $\pi\in\Alt_k$. Choose left and right forward transformations $a,b$ with $a(1)=x$, $a(2)=y$, and $b^{-1}(1)=S$. Both backward context transformations are $\eta_\ell$. The entire backward map fixes $1,2$, so the source answer is
\begin{equation}\label{eq:forward-factor}
\pi(x)\in S\quad\hbox{or}\quad\pi(y)\in S.
\end{equation}
If the moving channel orbit is forward, take $x\ne y$ and $S=\{i\}$. The complement predicate is $\pi^{-1}(i)\in[k]\setminus\{x,y\}$, forbidden by Lemmas~\ref{lem:query-interface} and \ref{lem:inverse-obstruction}.
If the orbit is backward, set $x=y=i$ and $|S|=2$. Put $\sigma=\pi^{-1}$. The raw query is $\sigma$, and the complement predicate is $\sigma^{-1}(i)\in[k]\setminus S$, again forbidden.

\emph{The subgroup maps onto the backward factor.} Write the semantic middle pair as $(\id,\sigma)$, so the actual backward map is $g=\sigma^{-1}$. Give both forward contexts transformation $\eta_k$. The entire forward map fixes $1,2$. Put a selector $a$, with $a(1)=x,a(2)=y$, in the \emph{right} backward context, and an output test $b$, with $b^{-1}(1)=S$, in the \emph{left} backward context. The source answer is
\begin{equation}\label{eq:backward-factor}
g(x)\in S\quad\hbox{or}\quad g(y)\in S.
\end{equation}
If the moving orbit is forward, its raw query is $\sigma$. Set $x=y=i$ and $|S|=2$; the complement is $\sigma^{-1}(i)\in[\ell]\setminus S$.
If the orbit is backward, its raw query is $g=\sigma^{-1}$. Set $x\ne y$ and $S=\{i\}$; the complement is $g^{-1}(i)\in[\ell]\setminus\{x,y\}$.
Both predicates violate the same obstruction. This covers every possible isolated factor and orientation. Therefore $m\ge2(k+\ell)$.
\end{proof}

\subsection{Exact source complexity}

\begin{corollary}\label{cor:source-minimal}
For $k\ge11$ and $\ell\in\{k,k+1\}$,
$\statecomplexity(L_{k,\ell})=k+\ell$.
\end{corollary}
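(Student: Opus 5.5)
The proof has two directions. The upper bound $\statecomplexity(L_{k,\ell})\le k+\ell$ is immediate: by Lemma~\ref{lem:witness-language}, the five-tuple $A_{k,\ell}$ recognizes $L_{k,\ell}$ in the stated model and has exactly $k+\ell$ states. So the substance is the lower bound: every 2DFA $B$ recognizing $L_{k,\ell}$ has at least $k+\ell$ states.

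For the lower bound I would run the same diagram machinery as in the proof of Theorem~\ref{thm:complement-lower}, now applied to a target $B$ with $m$ states that recognizes $L_{k,\ell}$ itself. Section~\ref{sec:diagrams} explicitly allows the target to recognize $L_{k,\ell}$ or its complement. Indeed Lemma~\ref{lem:semantic-product} only uses that contexts separate distinct semantic pairs, and a separation is equally valid for a recognizer of the language or of its complement.

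The chain of steps is as follows.
\begin{enumerate}[label=(\roman*)]
\item Lemma~\ref{lem:semantic-product} gives a surjective homomorphism $\theta\colon M\twoheadrightarrow\Alt_k\times\Alt_\ell$.
\item Lemma~\ref{lem:group-extraction} provides an idempotent $e$ such that $eMe$ is a group that still maps onto $G$.
\item Lemma~\ref{lem:unit-channels} makes $eMe$ act faithfully on the signed through channels $U\sqcup V$.
\item Lemma~\ref{lem:capacity} bounds the number of channels: $|U|+|V|\le m$.
\item Corollary~\ref{cor:product-capacity} says any faithful action of a group surjecting onto $\Alt_k\times\Alt_\ell$ has at least $k+\ell$ points. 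Applied to $eMe$, this gives $|U|+|V|\ge k+\ell$.
\end{enumerate}
Combining the last two bounds yields $m\ge k+\ell$.

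I expect the only point requiring care to be step (i), the validity of Lemma~\ref{lem:semantic-product} for a recognizer of $L_{k,\ell}$ rather than of its complement. Its proof distinguishes two middle words by a context in which exactly one of them lies in $L_{k,\ell}$. Diagram substitution (Lemma~\ref{lem:diagrams}) then forces their diagrams to differ whichever of the two languages $B$ recognizes. So no modification is needed; I would simply remark that the argument never used which side accepts.

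No query-controller argument is needed here. The factor-two gap in the complement bound came from isolation plus Lemma~\ref{lem:inverse-obstruction}, whereas the source bound needs only the crude degree bound. That bound is consistent with the witness itself realizing exactly $k+\ell$ channels: $k$ forward $P$-channels and $\ell$ backward $Q$-channels.
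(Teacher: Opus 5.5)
Your proposal is correct and matches the paper's proof step for step. The witness five-tuple gives the upper bound. For the lower bound, the semantic-product surjection (valid for a recognizer of $L_{k,\ell}$ as well as of its complement), group extraction and the faithful channel action feed into Corollary~\ref{cor:product-capacity} for $|U|+|V|\ge k+\ell$ and Lemma~\ref{lem:capacity} for $|U|+|V|\le m$.
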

\begin{proof}
Let an arbitrary $m$-state automaton recognize $L_{k,\ell}$. The
semantic separation in Lemma~\ref{lem:semantic-product} applies to
this recognizer as well as to a complementing target. Its interval
diagram semigroup therefore maps onto $\Alt_k\times\Alt_\ell$.
Lemmas~\ref{lem:group-extraction} and~\ref{lem:unit-channels} give
a group with a faithful action on the signed channels $U\sqcup V$.
Corollary~\ref{cor:product-capacity} gives $|U|+|V|\ge k+\ell$,
while Lemma~\ref{lem:capacity} gives $|U|+|V|\le m$. Hence
$m\ge k+\ell$. The five-tuple in Section~\ref{sec:new-witness}
attains this bound.
\end{proof}

\section{Halting complementation and incoming directions}
\label{sec:direction-upper}

\subsection{Direction-determinate sources}

We give the reverse-tree traversal explicitly in the present model.
The method goes back to Sipser~\cite{Sipser1980}; its use in
complementation appears in GMP~\cite{GMP}. The two-state-per-source-state
traversal for direction-determinate graph-walking automata is given
by Kunc and Okhotin~\cite{kuncokhotin2020} and restated in
\cite[Lemma~8]{martynovaokhotin2021}. We specify the initial state and accepting configuration as part of
the transition table, so neither requires an additional state.

Call a source automaton \emph{direction-determinate} if there is a map
$d:Q\to\{-1,+1\}$ such that every defined transition, including every
endmarker transition, has the form
\[
  \delta(p,a)=(q,d(q)).
\]
Thus the state entered by a transition determines its direction. The two banks
of our witness have this property: every $P_i$ is entered moving right, and
every $Q_j$ is entered moving left.

\begin{theorem}\label{thm:direction-upper}
Every direction-determinate $n$-state source has a halting complement
with at most $2n$ states in the exact model of this paper.
\end{theorem}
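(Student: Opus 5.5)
The plan is to build a halting machine that performs a depth-first (Euler-tour) traversal of the \emph{backward} configuration tree rooted at the accepting configuration $(f,0)$, following Sipser's method. Treat $(f,0)$ as terminal, ignoring any outgoing transition there. Each configuration then has at most one successor, so the configurations from which the source eventually reaches $(f,0)$ form a finite in-tree with root $(f,0)$. It is finite because only finitely many positions exist on a fixed input, and acyclic because a cycle would have to avoid the terminal root. The source accepts $w$ exactly when $(s,0)$ lies in this tree. The complement machine therefore starts its tour at the root. It halts and rejects as soon as it meets $(s,0)$. If it completes the tour back at the root without meeting $(s,0)$, it accepts. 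The root lies at position $0$ on $\LL$, so that final return is exactly the accepting configuration $(f',0)$ of the new model. If $s=f$, the source accepts every word, and an empty-language complement is trivial.

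The key step is to make predecessor enumeration local, using direction-determinacy. A predecessor of $(q,i)$ has the form $(p,i-d(q))$ with $\delta(p,x_{i-d(q)})=(q,d(q))$. Thus all predecessors sit in one cell, determined by $q$ alone, and the candidates $p$ are determined by the symbol in that cell. Fix, for each pair (symbol, $q$), a linear order on these predecessor states. The new machine uses two states per source state $q$: $q^{\downarrow}$ means ``the tour is at node $q$ and now examines its children'', and $q^{\uparrow}$ means ``the tour has just returned from a child that was in state $q$''. In each case I would choose the head position so that the relevant cell is the one to be read: the predecessor cell for $q^{\downarrow}$, and the cell of the node just finished for $q^{\uparrow}$. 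The transitions are then as follows.
\begin{itemize}[nosep]
\item From $q^{\downarrow}$, read the symbol and move to the first predecessor $p$ in the form $p^{\downarrow}$. If $q$ has no predecessor there, turn into $q^{\uparrow}$ in the reverse direction.
\item From $p^{\uparrow}$, step back to the parent's cell. This cell is one move in direction $d(\text{parent})$, and the parent is $\delta(p,\cdot)$, which is computable by reading $p$'s cell. Then either move to the next sibling after $p$ in the fixed order, or, if $p$ was the last sibling, become $q^{\uparrow}$ for the parent $q$.
\end{itemize}
Every move must be a genuine $\pm1$ step with no stationary moves. The head must sit on the cell of the configuration so that the symbol there can be read. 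I expect bookkeeping of exactly where the head rests in each mode to be the fiddly part, and I would fix it first by a small case table on $d(p)$ and $d(q)$. Detection of $(s,0)$ is done by leaving the relevant transition undefined, which rejects immediately.

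The main obstacle is squeezing the initial state and the accepting configuration into the same $2n$ states and handling the endmarkers. For the initial state I would take $f^{\downarrow}$ placed at position $0$. This should work since defined transitions on $\LL$ move right, and in that convention $f^{\downarrow}$ examines predecessors at position $1$ or at $0$ itself. For the accepting state I would take $f^{\uparrow}$ arriving at $\LL$ after the last child of the root. I must check that no other visit to $(\cdot,0)$ in $f^{\uparrow}$ happens prematurely; this holds because only the root node in state $f$ is at position $0$. I must also ensure that the new machine halts on every input, because the tour of a finite tree is finite. Determinism in the model means that the order of siblings and of parent recomputation must be expressed purely from the current state and symbol, which direction-determinacy guarantees. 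I would finish by writing the full transition table row by row, including the $\LL$ and $\RR$ rows, and verifying the halting, accept and reject cases against the tree characterization.
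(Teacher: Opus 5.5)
Your construction is the paper's: $q^{\downarrow}$ (head on the predecessor cell) and $q^{\uparrow}$ (head on the finished node's cell) are its $B_q$ and $F_q$. You also use the same initial state on $\LL$, the same accepting configuration and the same tree argument.

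\textbf{Boundary leaves.} The gap is in the endmarker handling you deferred, where your descent rule fails as stated. The rule sends a selected child $(p,j)$ to $p^{\downarrow}$ at cell $j-d(p)$. If $j=0$ and $d(p)=+1$, or $j$ is the right-marker cell and $d(p)=-1$, that cell is off the tape. The move would then leave a marker outward, which the model forbids. Such a child has no possible predecessor, so it must be skipped at selection time. This applies both in the $q^{\downarrow}$ row and in the $p^{\uparrow}$ row that advances to the next sibling. The crucial point is that $(s,0)$ is itself such a boundary leaf whenever $d(s)=+1$, as in the witness with $s=P_1$. Hence the undefined (rejecting) row must be the one that reads $\LL$ and would select $s$, and this test must precede the skip. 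A row of $s^{\downarrow}$ will not do: that configuration may be unplaceable, and on an ordinary cell it cannot know its position. This is exactly the paper's $\operatorname{select}$ procedure, whose initial-configuration test comes before its boundary-leaf test.

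\textbf{Initialization.} Your initialization works only when $d(f)=-1$. The special row $(f^{\downarrow},\LL)\mapsto(f^{\downarrow},+1)$ is consistent because every other transition into $f^{\downarrow}$ moves $-d(f)=+1$ and so never lands on $\LL$. The root's children are then enumerated at cell $1$, never ``at $0$ itself'', since there are no stationary moves. If $s\ne f$ and $d(f)=+1$, your alternative would enumerate the predecessors of $(f,1)$ rather than of the root. For instance, if $\delta(s,\LL)=(f,+1)$, the machine would reject although the source language is empty. This case needs the paper's separate observation: $(f,\LL)$ is then unreachable, so the complement is $\Sigma^*$ and one state suffices.

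\textbf{Sibling step.} Your $p^{\uparrow}$ rows must not first step to the parent's cell. The siblings lie in $p$'s own cell $j$, so a single transition reading that cell either descends to $p'^{\downarrow}$ at cell $j-d(p')$ or moves $d(q)$ into $q^{\uparrow}$.

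With these repairs your table coincides with the paper's.
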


\begin{proof}
Write the source as $(Q,\Sigma,\delta,s,f)$. If $s=f$, its language is
$\Sigma^*$; a two-state automaton with a nonfinal initial state and no
transitions recognizes its complement. If $s\ne f$ and $d(f)=+1$, the
configuration $(f,\LL)$ is unreachable, and a one-state automaton whose
initial state is final recognizes the complement. Both constructions halt and
use at most $2n$ states. Hence assume $s\ne f$ and $d(f)=-1$.

Let $\delta^*$ be $\delta$ with the row $(f,\LL)$ deleted. Acceptance takes
priority over this row, so its deletion does not change the source language.
Fix a total order $<$ on $Q$. The complement has state set
\[
  Q'=\{B_q:q\in Q\}\mathbin{\dot\cup}\{F_q:q\in Q\},
  \qquad A^c=(Q',\Sigma,\delta',B_f,F_f).
\]
These are all the states of the complement. Their roles are as follows. A configuration $(B_q,j)$
enumerates the predecessors of the source configuration $(q,j+d(q))$;
all those predecessors lie at cell $j$. A configuration $(F_p,j)$ means
that the reverse subtree rooted at $(p,j)$ has been traversed, and the
construction should advance to the next sibling or return to the parent.

Here is a complete finite prescription for the transition table. For each
ordinary symbol or endmarker $a$, put
\[
  P(q,a)=\{p\in Q:\delta^*(p,a)=(q,d(q))\},
\]
listed in increasing order. Given any increasing sublist $C$ of $Q$, define
$\operatorname{select}(C,a)$ by inspecting its candidates in order:
\begin{enumerate}
\item If $a=\LL$ and the current candidate $p$ is $s$, return
$\mathsf{reject}$.
\item If $a=\LL$ and $d(p)=+1$, or if $a=\RR$ and $d(p)=-1$,
skip candidate $p$. Such a source configuration has no predecessor, since
its predecessor cell would be outside the tape.
\item Otherwise return the transition $(B_p,-d(p))$.
\end{enumerate}
If the list is exhausted, return $\mathsf{exhausted}$. The selection is made when the transition table is constructed,
not during the run. It therefore uses no extra states or stationary
moves. In particular, the initial-configuration
test precedes the boundary-leaf test.

Set the initialization row to
\[
  \delta'(B_f,\LL)=(B_f,+1).
\]
Apart from this exceptional row, leave $(B_q,\LL)$ undefined when
$d(q)=-1$, and leave $(B_q,\RR)$ undefined when $d(q)=+1$.
For every other backward row $(B_q,a)$, evaluate
$\operatorname{select}(P(q,a),a)$. Use its returned transition;
$\mathsf{reject}$ means that the row is undefined, and
$\mathsf{exhausted}$ gives the transition $(F_q,d(q))$.

For a forward row $(F_p,a)$, the accepting configuration $(F_f,\LL)$
needs no transition. If $\delta^*(p,a)$ is undefined, leave the row undefined.
Otherwise write $\delta^*(p,a)=(q,d(q))$ and evaluate
\[
  \operatorname{select}\bigl(\{p'\in P(q,a):p'>p\},a\bigr).
\]
Again use its returned transition, leave the row undefined on
$\mathsf{reject}$, and use $(F_q,d(q))$ on $\mathsf{exhausted}$.
Every unspecified row is undefined.

All defined endmarker moves are inward. A selected backward descent that
would move outward is eliminated by the boundary-leaf test. An exhausted
forward row follows the source's original inward marker move. An exhausted
backward row moves to its parent, and the two cases in which that parent
would lie outside the tape were explicitly left undefined. Furthermore,
every transition entering $B_f$, except its initialization row, moves
$-d(f)=+1$. Consequently $(B_f,\LL)$ cannot be revisited, which justifies
reusing $B_f$ as the initial state without an additional state.

To prove correctness, fix an input. Consider the source configurations whose
runs under $\delta^*$ reach $(f,\LL)$. They form a finite tree directed
toward the root $(f,\LL)$: each nonroot vertex has one successor, and no
directed cycle can reach the root. Order the children of each vertex by
source-state order. Incoming direction determinacy places all children of
one vertex at the same input cell.

The prescribed machine performs depth-first traversal of this reverse tree.
A backward row starts a child list; a forward row advances in that list or
returns to the parent. Boundary leaves are skipped without entering them,
after checking whether they are the source's initial configuration. The
machine rejects immediately if it encounters $(s,\LL)$. If it finishes
the tree without this encounter, it reaches $(F_f,\LL)$ and accepts.
Thus it rejects exactly when the source accepts. The finite tree traversal
always terminates, even when the source's initial computation loops. This
proves both complementation and halting with $2n$ states.
\end{proof}

\begin{proof}[Proof of Theorem~\ref{thm:main}]
Take $k=\lfloor n/2\rfloor$, $\ell=\lceil n/2\rceil$, and
$K_n=L_{k,\ell}$. Corollary~\ref{cor:source-minimal} gives
$\statecomplexity(K_n)=k+\ell=n$. The witness is direction-determinate,
since all $P$-states are entered moving right and all $Q$-states
moving left. Theorem~\ref{thm:direction-upper} therefore supplies
a halting complement with at most $2n$ states.
Theorem~\ref{thm:complement-lower} requires at least $2n$ states.
Thus $\statecomplexity(\overline{K_n})=2n$.
\end{proof}

\subsection{States entered in both directions}

\begin{corollary}\label{cor:mixed-directions}
For an arbitrary $n$-state source, first delete the effective outgoing row at
$(f,\LL)$. Let $b$ be the number of states that are targets of both a
left-moving transition and a right-moving transition in the resulting table,
including endmarker transitions. Its complement has a halting recognizer
with at most $2n+2b$ states.
\end{corollary}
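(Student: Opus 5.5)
The plan is to reduce to Theorem~\ref{thm:direction-upper}. We convert the source into an equivalent direction-determinate source with at most $n+b$ states, and then apply the $2(n+b)$ bound.

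\textbf{Deleting the row at $(f,\LL)$.} Acceptance takes priority over any outgoing transition at $(f,\LL)$. Deleting this row therefore leaves the language unchanged. This deletion can only shrink the set of states entered in both directions, which is why $b$ is measured in the resulting table.

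\textbf{Splitting the states.} Classify each state $q$ by the set of directions in which it is entered: $\{+1\}$, $\{-1\}$, both, or neither.
\begin{itemize}
\item A state entered in only one direction keeps that direction as $d(q)$.
\item A state entered in neither direction is reached by no transition. If it is not $s$ it can be deleted. If it is $s$, give it an arbitrary label. Among all the states, only $s$ and $f$ are occupied without any entering move: $s$ at the initial configuration, and $f$ because acceptance happens on arrival at $(f,\LL)$.
\item Each of the $b$ two-direction states $q$ is replaced by two copies $q^+$ and $q^-$, with $d(q^\pm)=\pm1$. Both copies get the outgoing rows of $q$ verbatim. A transition $\delta(p,a)=(q,\pm1)$ is redirected to $(q^\pm,\pm1)$. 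Outgoing rows pointing into split states are redirected in the same way in every copy.
\end{itemize}
The new automaton has $n+b$ states and is direction-determinate.

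\textbf{Choosing the new initial and final states.} If $s$ is split, make either copy initial; both copies share the same outgoing rows. The final state needs more care. Acceptance at $(f,\LL)$ can only be reached by a left move, or at time zero when $s=f$. So if $f$ is split, take $f^-$ as final. If $d(f)=+1$ and $f\ne s$, acceptance is impossible, and the edge cases of Theorem~\ref{thm:direction-upper} already cover this situation.

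\textbf{Correctness by simulation.} The new run tracks the old run step by step: the copy index simply records the last move direction. Acceptance, rejection by undefined rows, and infinite runs are therefore preserved.

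\textbf{Main obstacle.} The delicate point is the interaction between the initial configuration and acceptance. When $s=f$ is split, the machine starts in a copy such as $s^+$ while acceptance expects $f^-$. This case is trivial, because the language is then $\Sigma^*$, and the first paragraph of the proof of Theorem~\ref{thm:direction-upper} handles it.

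Applying Theorem~\ref{thm:direction-upper} to the split source then gives a halting complement with at most $2(n+b)=2n+2b$ states.
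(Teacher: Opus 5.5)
Your proposal is correct and follows the paper's proof. You split each of the $b$ mixed states by incoming direction to get an equivalent direction-determinate source with at most $n+b$ states, take any copy of $s$ as initial and $f^-$ as final, dispose of $s=f$ and of an unreachable $(f,\LL)$ by the trivial complements, and then apply Theorem~\ref{thm:direction-upper}.

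The only wrinkle is your rule deleting unentered states other than $s$. If $f\ne s$ has no incoming transition, do not delete it; as the paper does, treat this as another instance of the trivial case where no left move enters $f$, so the language is empty.
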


\begin{proof}
The cases $s=f$, or $s\ne f$ with no transition entering $f$ while moving
left, have the trivial complements used above. In every other case, split
each of the $b$ mixed states $q$ into $q^-$ and $q^+$. Keep one copy of every
other state, assigned its unique incoming direction, or any direction if it
has no incoming transitions. Give every copy of a source state the same
outgoing rows; a row whose original target was $q$ and whose direction was
$\varepsilon$ enters the copy of $q$ assigned direction $\varepsilon$.
Choose any existing copy of $s$ as initial and the copy of $f$
assigned direction $-1$ as final; denote this final copy by $f^-$. The resulting machine has $n+b$ states and is
direction-determinate.

Projection of the cloned states to their original names reproduces each
run of the original source. Since $s\ne f$, every visit to $(f,\LL)$
must have arrived by a left move and therefore corresponds precisely to
$(f^-,\LL)$. Thus the splitting preserves the language, including infinite
rejecting runs. Theorem~\ref{thm:direction-upper} gives the stated bound.
\end{proof}

In particular, a family requiring at least $cn$ complement states for a
fixed $c>2$ must have
\[
  b\ge \frac{c-2}{2}\,n.
\]
Thus a family with only $o(n)$ mixed states has complement complexity
at most $2n+o(n)$. Since always $b\le n$, the same construction
also proves $C_3(n)\le4n$ in the exact model of this paper. Together
with Theorem~\ref{thm:main}, this gives
\[
 2n\le C_3(n)\le4n\qquad(n\ge22).
\]
The matching $2n$ upper bound applies to direction-determinate sources;
the lower bound in Theorem~\ref{thm:main} applies to unrestricted targets.

\section{Comparison with GMP's acceptance convention}\label{sec:conventions}

The exact equalities of Theorem~\ref{thm:main} use left-endmarker
acceptance. In GMP's definition~\cite[Section~2]{GMP}, stationary
moves are allowed, there may be several final states, and a computation
accepts upon reaching any final state at any tape position. Write
$\operatorname{sc}_{\mathrm{GMP}}(L)$ for the minimum number of states
under that convention. Our state $Q_1$ participates in ordinary
backward scans, so it cannot simply be declared globally final.

This difference concerns the role of a final state, rather than the
choice of the left or right endmarker. In our model the final state
can do ordinary work away from the accepting marker. GMP's
Lemma~3.1~\cite[pp.~1176--1177]{GMP} normalizes a nontrivial-language
recognizer without adding states: it has one final state, that state
sweeps left on every symbol other than $\LL$, its left-marker row
is undefined, and there are no stationary moves. A visit to that
state therefore reaches the left endmarker. Reading the normalized
table with our acceptance condition preserves the language.

Conversely, an $r$-state automaton $A=(Q,\Sigma,\delta,s,f)$ in
our model has an $(r+1)$-state GMP realization
\[
 \widehat A=(Q\mathbin{\dot\cup}\{a\},\Sigma,\widehat\delta,s,\{a\}),
\]
where $a$ is fresh, $\widehat\delta(f,\LL)=(a,0)$, all rows of
$a$ are undefined, and every other old row is unchanged. Only $a$
is globally final. Thus for every nontrivial language $L$,
\[
 \statecomplexity(L)\le \operatorname{sc}_{\mathrm{GMP}}(L)
              \le \statecomplexity(L)+1.
\]
The extra state does not affect the leading factor, but it matters
for the exact bound. Merely applying these two
inequalities gives a $2N-2$ complement lower bound for the
$N=n+1$ state source. A fixed channel in GMP's normal form yields
the stronger exact comparison below.

\begin{lemma}[A fixed accepting channel]\label{lem:gmp-fixed-channel}
Let an $m$-state automaton be in the normal form of GMP's
Lemma~3.1: its unique final state $f$ moves left on every symbol
other than $\LL$, and its row at $\LL$ is undefined. Interpret its
acceptance at $\LL$ only. For any group of nonempty interval
diagrams obtained as in Lemma~\ref{lem:group-extraction}, the signed
channel action has a globally fixed backward channel. Removing this
channel leaves a faithful action on at most $m-1$ points.
\end{lemma}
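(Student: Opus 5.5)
Since $f$ moves left on every ordinary symbol and $f$ is a state of the automaton, I would look at the backward through row of $f$ in $e$. Entering the interval of any nonempty word $v$ from the right in state $f$, the head sweeps left in state $f$ through every cell and exits on the left in state $f$. So $B_v(f)=f$ for every nonempty ordinary word, and $L_v(f),R_v(f)$ stay undefined for that entry. In particular $f\in V=\im B_e$, and the backward channel labelled $f$ is present.

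Next I would show that this channel is fixed by every unit $g$ of $eMe$. By Lemma~\ref{lem:unit-channels}, $B_g=\beta_g\circ B_e$. Since $g$ is itself the diagram of an actual nonempty word, the observation above gives $B_g(f)=f$, and also $B_e(f)=f$. Hence $\beta_g(f)=f$ for all $g$, so the channel is globally fixed by the whole group. The action on the remaining channels $(U\sqcup V)\setminus\{f^{\text{back}}\}$ is still an action, because the removed point is fixed.

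For faithfulness after removal, suppose $g$ acts trivially on all remaining channels. It already fixes the removed channel, so it acts trivially on all of $U\sqcup V$, and faithfulness in Lemma~\ref{lem:unit-channels} gives $g=e$. For the count, Lemma~\ref{lem:capacity} gives $|U|+|V|\le m$, so the reduced set has at most $m-1$ points.

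The only delicate point is checking that the left-sweep of $f$ really constitutes a through run in the sense of the diagram. The entry is at the last cell and the first exit is on the left, and no return or failure can intervene because every step is a left move in $f$. This is the same reasoning Lemma~\ref{lem:capacity} uses to treat through runs as distinct states on a cell, so the argument is consistent with it.
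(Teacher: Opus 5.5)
Your proposal is correct and follows essentially the same route as the paper: the left sweep gives $B_w(f)=f$ for every nonempty word, so $B_h=\beta_h B_e$ forces $\beta_h(f)=f$; removing this fixed point preserves faithfulness, and Lemma~\ref{lem:capacity} gives the bound $m-1$. One harmless slip: a left entry in $f$ exits immediately on the left, so $L_v(f)=f$ is defined rather than undefined, and only $R_v(f)$ is undefined for the right entry. Neither fact is used in the argument.
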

\begin{proof}
For every nonempty ordinary word $w$, a right entry in state $f$
sweeps left through the entire interval and exits in state $f$.
Hence $B_w(f)=f$. In particular, for the identity diagram $e$,
we have $f\in V=\im B_e$. For every group element $h$, the normal
form $B_h=\beta_h B_e$ gives
\[
 \beta_h(f)=\beta_h(B_e(f))=B_h(f)=f.
\]
Thus the signed point $(\mathsf{backward},f)$ is fixed by the
homomorphic action $(\alpha_h,\beta_h^{-1})$.
Its removal preserves faithfulness: any element fixing all remaining
points also fixes the removed point, and hence is the identity.
By Lemma~\ref{lem:capacity}, the remaining set has size
$|U|+|V|-1\le m-1$.
\end{proof}

\begin{theorem}[Exact gap under GMP's convention]\label{thm:gmp-exact}
For every $N\ge23$, there is a language $L$ over the fixed ternary
alphabet $\Sigma$ such that, in GMP's model,
\[
 \operatorname{sc}_{\mathrm{GMP}}(L)=N,
 \qquad
 \operatorname{sc}_{\mathrm{GMP}}(\overline L)=2N-1.
\]
In particular, the gap asked for in
\cite[Section~6, p.~1186, problem~(e)]{GMP} exists.
\end{theorem}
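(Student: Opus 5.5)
Given $N\ge23$, put $n=N-1\ge22$, $k=\lfloor n/2\rfloor$, $\ell=\lceil n/2\rceil$, and take $L=L_{k,\ell}=K_n$. The task is to turn the left-endmarker equalities $\statecomplexity(K_n)=n$ and $\statecomplexity(\overline{K_n})=2n$ into GMP equalities $N$ and $2N-1$. Both languages are nontrivial, since the empty word lies in $L_{k,\ell}$ and nonempty words outside it exist by paired generation. I will prove four inequalities, using the two-way comparison $\statecomplexity\le\operatorname{sc}_{\mathrm{GMP}}\le\statecomplexity+1$ for the easy ones and Lemma~\ref{lem:gmp-fixed-channel} for the sharp ones.

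\emph{Upper bounds.} The hatted construction $\widehat A$ applied to the $n$-state witness gives $\operatorname{sc}_{\mathrm{GMP}}(L)\le n+1=N$. Applied to the $2n$-state halting complement of Theorem~\ref{thm:direction-upper}, it gives $\operatorname{sc}_{\mathrm{GMP}}(\overline L)\le 2n+1=2N-1$.

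\emph{Lower bounds.} Take an $m$-state GMP recognizer of $L$ or of $\overline L$. By GMP's Lemma~3.1, normalize it without adding states, and read it with left-endmarker acceptance; the language is unchanged. This gives an $m$-state automaton in our model with the normal-form property. Lemmas~\ref{lem:semantic-product}, \ref{lem:group-extraction}, and~\ref{lem:unit-channels} then yield a group of interval diagrams acting faithfully on $U\sqcup V$ and surjecting onto $\Alt_k\times\Alt_\ell$. Lemma~\ref{lem:gmp-fixed-channel} supplies a globally fixed backward channel. Deleting it leaves a faithful action on at most $m-1$ points.
\begin{itemize}
\item For $L$, Corollary~\ref{cor:product-capacity} applies to this reduced faithful action, so $m-1\ge k+\ell=n$, that is, $m\ge N$.
\item For $\overline L$, suppose $m-1<2n$. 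I will rerun the proof of Theorem~\ref{thm:complement-lower} with $X$ equal to the reduced channel set, split into its invariant parts $U$ and $V\setminus\{f\}$. Lemma~\ref{lem:isolation} needs only $|X|<2(k+\ell)$ and faithfulness, so it isolates one natural orbit and fixes every other point of $X$. The deleted channel is fixed as well. Hence the subgroup still moves exactly one orbit of the full channel set and fixes all other channels, and Lemma~\ref{lem:query-interface} applies verbatim. The contexts of Theorem~\ref{thm:complement-lower} and Lemma~\ref{lem:inverse-obstruction} then give the same contradiction. Therefore $m-1\ge 2n$, that is, $m\ge 2n+1=2N-1$.
\end{itemize}

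\emph{Main obstacle.} The delicate step is checking that the normalization of GMP's Lemma~3.1 really yields an automaton in our model with the same number of states and the same language. It must have a single final state $f$ that sweeps left and is undefined at $\LL$, and no stationary moves. Then the fact that the first visit to $f$ reaches $\LL$ makes global acceptance coincide with acceptance at $(f,0)$. One point needs care: GMP's initial configuration may sit at the left endmarker with initial state possibly equal to $f$. I would handle this by noting that the normal form applies to nontrivial languages and that the initial state differs from $f$ there. A second point is the substitution argument of Lemma~\ref{lem:diagrams}, which presumes acceptance only at the actual left endmarker; reading the normal form in our model guarantees this. Everything else is bookkeeping with the constant $1$, relating $n=N-1$ to the stated bounds.
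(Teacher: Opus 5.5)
Your proposal is correct and follows the paper's proof essentially step for step. It uses the same fresh-final-state conversion of the witness and of the $2n$-state halting complement for the upper bounds. For the lower bounds it uses the same GMP normalization, read with left-endmarker acceptance, followed by deletion of the fixed backward channel $f$ via Lemma~\ref{lem:gmp-fixed-channel}, before invoking Corollary~\ref{cor:product-capacity} for $L$ and Lemma~\ref{lem:isolation} plus the query reduction for $\overline L$. The only cosmetic difference is that you certify nontriviality through paired generation, whereas the paper exhibits the explicit word $\mathtt C\notin K_n$.
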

\begin{proof}
Put $n=N-1\ge22$ and $L=K_n$. These languages are nontrivial:
$\varepsilon\in K_n$ and $\mathtt C\notin K_n$.

For the source upper bound, add a fresh state $a$ to the $n$-state
source five-tuple, make $a$ the sole globally final state, and set
$\delta(Q_1,\LL)=(a,0)$. Retain all other original rows and leave
all rows of $a$ undefined. No original state is globally final.
This is an $(n+1)$-state GMP automaton for $K_n$.
For the complement upper bound, take the $2n$-state halting complement
from Theorem~\ref{thm:direction-upper}. Apply the same conversion:
add a fresh sole final state $a$, replace the accepting configuration's
unused left-marker row by a stationary move into $a$, and keep every
other original state nonfinal. This gives a $(2n+1)$-state GMP
automaton for $\overline{K_n}$.

For either lower bound, normalize an arbitrary $m$-state GMP
recognizer using GMP's Lemma~3.1, with no increase in states.
Because its final state sweeps left, interpreting finality only at
$\LL$ preserves the language. Contextual separation still gives
a group of actual interval diagrams mapping onto
$\Alt_k\times\Alt_\ell$, where $k+\ell=n$.
By Lemma~\ref{lem:gmp-fixed-channel}, this group has a faithful
action on at most $m-1$ signed channels after deleting its fixed
accepting channel.

For a recognizer of $K_n$, Corollary~\ref{cor:product-capacity}
requires at least $k+\ell=n$ points in this remaining action.
Consequently $m\ge n+1$.

For a recognizer of $\overline{K_n}$, suppose that $m<2n+1$.
The remaining faithful action has fewer than $2(k+\ell)$ points,
so Lemma~\ref{lem:isolation} supplies one natural moving orbit for
one semantic factor, with all other remaining channels fixed.
Restoring the deleted channel adds only a fixed point. The subgroup
therefore satisfies exactly the hypotheses of the query reduction
in Theorem~\ref{thm:complement-lower}. Its four factor-and-direction
cases give the same contradiction. Hence $m\ge2n+1$.

The matching upper and lower bounds establish the equalities
$n+1=N$ and $2n+1=2N-1$.
\end{proof}

For this family the GMP ratio is $(2N-1)/N=2-1/N$, tending to two.
The explicit $(2N-1)$-state complement also shows that the same
language cannot witness a $2N$ lower bound in GMP's convention.
Theorems~\ref{thm:main} and~\ref{thm:gmp-exact} give the exact
counts in the two respective models.

\section{Conclusion}

We have exhibited a language of two-way deterministic state complexity
$n$ whose complement requires exactly $2n$ states, for every $n\ge22$,
over one fixed ternary alphabet. In GMP's convention the same family
has exact complexities $N$ and $2N-1$. This answers their question
about the existence of a complementation gap.

The witness reaches the $2n$ upper bound for direction-determinate
sources. For general sources, the bound $2n+2b$ shows that a larger
leading constant requires linearly many states that can be entered
from both directions. The optimal general constant remains between
two and four.

\section*{Acknowledgments}
I thank my advisor, Christos Kapoutsis, for his guidance and support.
I would also like to thank Adam Ghaida for his help verifying the
correctness of my proof.

This work was developed with assistance from OpenAI's Codex Astra Ultra.
I contributed initial ideas and discussed the constructions and proofs
with Astra. Astra developed most of my mathematical outputs, including the two-bank witness and its ternary encoding, the
interval-diagram and alternating-group arguments, and the
permutation-query obstruction. Astra also helped adapt the known
reverse-tree traversal to the stated model, derive the comparison
with GMP's convention, and write and run finite checks. I take responsibility for the manuscript and
any errors.

\bibliographystyle{plain}
\bibliography{complementation_exact_references}
\end{document}